\documentclass[11pt,psamsfonts
]{amsart}
\newcommand{\ignore}[1]{}
\usepackage{a4wide}
\usepackage{lmodern}

\usepackage{amssymb}
\usepackage{amsfonts}
\usepackage{microtype}
\usepackage{amsmath}
\usepackage{booktabs} 
\usepackage{textcomp}
\usepackage{xcolor}
\usepackage[colorlinks]{hyperref}
\definecolor{gruen}{rgb}{0,.5,0}
\hypersetup{colorlinks=true,linkcolor=gruen,citecolor=gruen,urlcolor=gruen}

\usepackage{crossreftools}
\pdfstringdefDisableCommands{%
    \let\Cref\crtCref
    \let\cref\crtcref}
\hypersetup{bookmarksnumbered, bookmarksopen=true, bookmarksopenlevel=1}

\usepackage[capitalise,nameinlink]{cleveref}
\crefname{appendix}{Appendix}{Appendices}
\crefname{lem}{Lemma}{Lemmas}
\crefname{thm}{Theorem}{Theorems}
\crefname{rem}{Remark}{Remarks}
\crefname{ex}{Example}{Examples}
\crefname{cor}{Corollary}{Corrolaries}
\crefname{prop}{Proposition}{Propositions}
\crefname{figure}{Figure}{Figures}

\let\origmaketitle\maketitle
\def\maketitle{
  \begingroup
  \def\uppercasenonmath##1{} 
  \let\MakeUppercase\relax 
  \origmaketitle
  \endgroup
}

\newtheorem{thm}{Theorem}
\newtheorem*{nthm}{Eichelberger's Theorem}

\newtheorem{cor}{Corollary}
\newtheorem{lem}{Lemma}
\newtheorem{prop}{Proposition}

\newtheorem{dfn}{Definition}

\theoremstyle{remark}
\newtheorem{rem}{Remark}
\newtheorem{ex}{Example}


\let\bar\overline

\newcommand{\unc}{\mathfrak{u}}
\usepackage{mathabx}
\newcommand{\up}[1]{#1^{\triangledown}}
\newcommand{\Item}[1]{\item[\mbox{\rm (#1)}]} 
\newcommand{\plus}{\mbox{\tiny $+$}}
\newcommand{\pos}[1]{#1_{\plus}} 
\newcommand{\Leq}{\preccurlyeq}

\newcommand{\hL}[1]{L_{\unc}(#1)} 
\newcommand{\mL}[1]{L_{\plus}(#1)} 
\newcommand{\uL}[1]{L(#1)} 
\newcommand{\zL}[1]{L_{\mathrm{z}}(#1)}
\newcommand{\zpL}[1]{L_{\mathrm{pz}}(#1)}
\newcommand{\uper}{\mathrm{per}_n}
\newcommand{\Case}[1]{\smallskip \noindent{\it Case}~#1:}
\newcommand{\tern}{\alpha}
\newcommand{\ttern}{\beta}
\newcommand{\ntern}{\bar{\alpha}}

\newcommand{\dual}[1]{{#1}^d}
\newcommand{\subcube}{A_{\tern}}
\newcommand{\nsubcube}{A_{\bar{\tern}}}
\newcommand{\tsubcube}{A_{\ttern}}

\newcommand{\cl}{c} 
\newcommand{\terncl}{c_{\tern}}
\newcommand{\terntr}{t_{\tern}}
\newcommand{\tterntr}{t_{\ttern}}
\newcommand{\nterntr}{t_{\ntern}}

\newcommand{\U}{H}
\newcommand{\implic}[2]{(#1)~$\Rightarrow$~(#2)}
\newcommand{\equival}[2]{(#1)~$\Leftrightarrow$~(#2)}
\newcommand{\bit}{\epsilon}
\newcommand{\HH}{\mathcal{H}}
\newcommand{\der}[3]{\mathrm{d} #1(#2;#3)}
\newcommand{\xx}{a\oplus\unc x} 

\newcommand{\cliq}{\mathrm{CLIQUE}_{n,k}}
\newcommand{\ncliq}{\mathrm{CLIQUE}}
\newcommand{\ecliq}{f_{n,k}}
\newcommand{\exk}[1]{\mathrm{E}_{#1}}

\usepackage[all,arc,curve,color,frame]{xy}

\def\circuits{
\begin{figure}[t]
\begin{center}
$\xygraph {
    !{(1,-0.5)}*+={\bar{x}_n}="xn"
    !{(1.7,-0.5)}*+={F_0}="F0"
    !{(2.9,-0.5)}*+={F_1}="F1"
    !{(3.5,-0.5)}*+={x_n}="x"
    !{(1,-1)}*+=[o]+[F]{\land}="a"
    !{(2.25,-1)}*+=[o]+[F]{\land}="b"
    !{(3.5,-1)}*+=[o]+[F]{\land}="c"
    !{(1.5,-1.5)}*+=[o]+[F]{\lor}="d"
    !{(2.25,-2)}*+=[o]+[F]{\lor}="e"
    !{(2.25,-2.4)}*+={F}="F"
    "xn":"a" "F0":"a" "F0":"b"
    "F1":"b" "F1":"c" "x":"c" "a":"d" "b":"d" "d":"e" "c":"e"
    }
    \qquad\qquad
\xygraph {
   !{(1.5,-0.2)}*{x_1}="x1"
   !{(2,-0.2)}*{\cdots}="dots"
   !{(2.5,-0.2)}*{x_m}="xm"
    !{(1,-0.4)}*{}="1"
    !{(2,-0.8)}*{H_m}="H"
    !{(3,-0.4)}*{}="2"
    !{(0,-1.5)}*{}="3"
    !{(4,-1.5)}*{}="4"
    !{(2.4,-1.2)}*{2^{2^m}~\mbox{\rm\footnotesize output~functions}}="out"
    !{(0,-2)}*{}="5"
    !{(2.3,-1.8)}*{x_{n-m}}="ym"
    !{(3.3,-1.8)}*{\cdots}="dots2"
    !{(3.8,-1.8)}*{x_{n}}="ym"
    !{(4,-2)}*{}="6"
    !{(2,-2.3)}*{2^{n-m}~\mbox{\rm\footnotesize input~functions}}="in"
    !{(2,-3.5)}*{}="7"
    !{(2,-2.8)}*{F_{n-m}}="Fm"
    !{(1,-1.5)}*+={\bullet}="h"
    !{(1,-1.2)}*+={f_b}="fb"
    !{(0.5,-2)}*+={\bullet}="f1"
    !{(1,-2)}*+={\bullet}="f2"
    !{(1.5,-2)}*+={\bullet}="f3"
    "1"-"2" "1"-"3" "2"-"4" "3"-"4" "5"-"6" "5"-"7" "6"-"7"
    "h"-"f1" "h"-"f2" "h"-"f3"
    }
$
\end{center}
    \caption[]{On the left is a fragment of a circuit for the consensus recursion \cref{eq:rec}, and on the right is a schematic form of a hazard-free circuit which results after combining this recursion with the Shannon-Muller construction. }
    \label{fig:shannon}
\end{figure}
}

\begin{document}

\title[]{\Large Notes on Hazard-Free Circuits}


\author[]{\normalsize Stasys Jukna$^{\ast}$}
\thanks{$^{\ast}$Faculty of Mathematics and Computer Science, Vilnius
    University, Lithuania}
    \thanks{$^{\phantom{*}}$Email: stjukna@gmail.com,\ homepage: \href{http://www.thi.cs.uni-frankfurt.de/~jukna/}
    {http://www.thi.cs.uni-frankfurt.de/$\sim$jukna/}}
    \thanks{$^{\phantom{*}}$Submitted to: \emph{SIAM J. Discrete Math.}}

  \begin{abstract}
    The problem of constructing hazard-free Boolean circuits (those
    avoiding electronic glitches) dates back to the 1940s and is an
    important problem in circuit design and even in cybersecurity. We show that a DeMorgan circuit is hazard-free if and only if the
    circuit produces (purely syntactically) all prime implicants as
    well as all prime implicates of the Boolean function it computes.
    This extends to arbitrary DeMorgan circuits a classical result of
    Eichelberger [\emph{IBM J. Res. Develop.,} 9 (1965)] showing this
    property for special depth-two circuits. Via an amazingly simple
    proof, we also strengthen a recent result Ikenmeyer et
    al. [\emph{J. ACM}, 66:4 (2019)]: not only the complexities of
    hazard-free and monotone circuits for monotone Boolean functions
    do coincide, but every optimal hazard-free circuit for a monotone
    Boolean function must be monotone. Then we show that hazard-free
    circuit complexity of a very simple (non-monotone) Boolean
    function is super-polynomially larger than its unrestricted
    circuit complexity. This function accepts a Boolean $n\times n$
    matrix iff every row and every column has exactly one
    $1$-entry. Finally, we show that every Boolean function of $n$
    variables can be computed by a hazard-free circuit of
    size~$O(2^n/n)$.
  \end{abstract}

\maketitle

\keywords{\footnotesize {\bf Keywords:} Hazard-free circuits, monotone circuits, lower bounds}


\section{Introduction}

The problem of designing hazard-free circuits naturally occurs when
implementing circuits in hardware (\cite{Cadwell,huffman}, but is also
closely related to questions in logic
(\cite{kleene,koerner,malinowski}) and even in cybersecurity
(\cite{hu,tiwari}). The importance of hazard-free circuits is already
highlighted in the classical textbook~\cite{Cadwell}.

In this paper, under a \emph{circuit} we will understand a
\emph{DeMorgan circuit}, that is, a Boolean circuit with AND, OR and
NOT operations as gates, where negations are applied only to input
variables.
Inputs are constants $0$ and $1$, variables $x_1,\ldots,x_n$ and
their negations $\bar{x}_1,\ldots,\bar{x}_n$.  A~\emph{monotone}
circuit is a DeMorgan circuit without negated inputs. The \emph{size}
of a circuit is the number of non-input gates, and the \emph{depth} is the maximum, over all input-to-output paths, of the number of wires in these paths.

Roughly speaking, hazards are spurious pulses or electronic glitches
that may occur at the output of a circuit during an input transition,
stipulated by physical delays at wires or gates in a specific hardware
implementation of the circuit. Having designed a hazard-free circuit
for a given Boolean function, one is sure that no glitches will occur
in \emph{any} hardware implementation of this circuit, regardless of
the physical delays.

 A mathematical definition of hazards not relating on a vague notion of
``possible glitches'' is given in \cref{sec:hazards}. But to give an
illuminating example right now, let us consider an optimal circuit
$F=xz\lor y\bar{z}$ for the Boolean function which outputs $x$ if
$z=1$ and outputs $y$ if $z=0$ (this function is known as
\emph{multiplexer}). If due to different delays at wires or gates, the
output of the AND gate $xz$ of the circuit $F$ reaches the output OR
gate \emph{later} than that of the AND gate $y\bar{z}$, and if we
replace input $a=(1,1,0)$ by $b=(1,1,1)$, then the circuit will output
$0$ before it outputs the correct value~$1$: for a short moment, the
output OR gate will see the \emph{new} value $0$ of $y\bar{z}$ on the
input $b$, and the \emph{old} value $0$ of $xz$ on the input~$a$.
Thus, some hardware implementations of the circuit $F$ may have a
spurious $1$-$0$-$1$ glitch during the input transition $a\to b$.

That every Boolean function $f$ can be computed by a hazard-free
DeMorgan circuit was shown by Huffman~\cite{huffman} already in 1957:
the DNF whose terms are prime implicants of $f$ is hazard-free.
In 1965, Eichelberger~\cite[Theorem~2]{Eichelberger} extended
this result: if a DNF $D$ representing $f$ contains
no zero-terms, that is, terms with a variable together with its
negation, then $D$ is hazard-free if and only if $D$ contains all
prime implicants of~$f$ as terms. After these structural results,
research mainly concentrated on developing algorithms for detecting
hazards, and many important results were obtained~\cite{beredeson72,Eichelberger,mate,yoeli}, just to mention some of them.

But somewhat surprisingly, the following natural question remained
open: by how much must the size of a circuit be increased to ensure
hazard-freeness?  Only recently, an important progress towards this
question was made by Ikenmeyer et al.~\cite{Ikenmeyer}: the
hazard-free circuit complexity of \emph{monotone} Boolean functions
coincides with their monotone circuit complexity. Together with known
lower bounds on the later complexity, this shows that achieving
hazard-freeness \emph{can} require a super-polynomial blow up in circuit
size. Moreover, Ikenmeyer et al.~\cite{Ikenmeyer}, and Komarath and
Saurabh~\cite{komarath} have shown that detecting the presence of
hazards is computationally hard.

Our results are the following.

\begin{enumerate}

\item A  DeMorgan circuit computing a Boolean function~$f$ is hazard-free if and only if it produces (purely syntactically) all prime implicants and all prime implicates of~$f$ (\cref{cor:ext-Eichenberger} proved in \cref{sec:dual}). This removes the ``no zero-terms''  requirement in Eichelberger's theorem~\cite[Theorem~2]{Eichelberger}.

\item Every minimal hazard-free circuit computing a monotone Boolean function is monotone; this is a direct consequence of  \cref{thm1} proved in \cref{sec:monot}. It strengthens  the aforementioned result of Ikenmeyer et al.~\cite{Ikenmeyer}, and the proof is surprisingly simple.

\item Already very simple Boolean functions require a supper-polynomial blow up in the circuit size and depth to be computed by hazard-free circuits (\cref{thm3,thm3b} proved in \cref{sec:gaps}).

\item Every Boolean function of $n$ variables can be computed by a hazard-free circuit of size~$O(2^n/n)$ (\cref{sec:final}).
\end{enumerate}
The paper is organized as follows. In the next section, we recall standard concepts concerning Boolean functions and circuit, as well as of hazards. \Cref{sec:monot,sec:complexity,sec:gaps} concern the complexity of hazard-free circuits (results 2 and 3); the proofs here are very simple.
In \cref{sec:struct}, we give four necessary and sufficient conditions for a DeMorgan circuit to be hazard-free (implying result 1). Along the way, we show the duality between $0$-hazards and $1$-hazards (\cref{sec:dual}). In the last \cref{sec:final}, we show that every Boolean function of $n$ variables can be computed by a hazard-free circuit of size~$O(2^n/n)$ (result~4), and state open problems.

\section{Preliminaries}
\label{sec:hazards}
We will use standard concepts concerning Boolean functions and
circuits as, for example, in the standard references~\cite{hammer,wegener}, but let us recall them for completeness.

 A \emph{literal} is
either a variable $x_i=x_i^{1}$ or its negation $\bar{x}_i=x_i^{0}$.
A \emph{term} is an AND of literals, and a \emph{clause} is an OR of
literals. A term is a \emph{zero-term} if it contains a variable $x_i$
together with its negation $\bar{x}_i$. Terms $0$ and $1$ are constant terms (note that $0$ is not a zero-term, it is a constant term). By analogy with the usual
convention for products, we often omit the operator $\land$ and denote
conjunction by mere juxtaposition. For example, we will write
$x\bar{y}z$ instead of $x\land \bar{y}\land z$.  Similarly, a
\emph{clause} is an OR of literals. A clause containing a variable
together with its negation is a \emph{one-clause}.  A~\emph{non-zero}
term is a term which is not a zero-term, and a \emph{non-one} clause
is a clause which is not a one-clause.  A \emph{DNF} (disjunctive
normal form) is an OR of terms, and a \emph{CNF} (conjunctive normal
form) is an AND of clauses.

An \emph{implicant} of a Boolean function $f:\{0,1\}^n\to\{0,1\}$ is a
non-zero term $t$ such that $t\leq f$ holds, that is, for every $a\in
\{0,1\}^n$, $t(a)=1$ implies $f(a)=1$. In other words, a non-zero term
$t$ is an implicant of $f$ if every evaluation of the literals of $t$
to $1$ already forces the function $f$ to take value $1$, regardless
of the $0/1$ values given to the remaining variables.  An implicant
$t$ of $f$ is a \emph{prime implicant} of $f$ if no proper subterm $t'$ of $t$ has this property, that is, if $t\leq t'$ and $t'\leq f$, then $t'=t$.  For example, if $f(x,y,z)=xy\lor x\bar{y}z$,
then $xy$, $x\bar{y}z$ and $xz$ are implicants of $f$, but $x\bar{y}z$
is not a prime implicant.  Dually, an \emph{implicate} of a Boolean
function $f:\{0,1\}^n\to\{0,1\}$ is a non-one clause $\cl$ such that
$f\leq c$ holds, that is, for every $a\in \{0,1\}^n$, $\cl(a)=0$
implies $f(a)=0$. An implicate $\cl$ of $f$ is a \emph{prime
  implicate} of $f$ if no proper subclause of $\cl$ has this
property. For example, if $f(x,y,z)=xz\lor y\bar{z}$, then $\cl=x\lor
y$ is a prime implicate of~$f$. Implicants and implicates of $f$
describe subcubes of the binary cube $\{0,1\}^n$ on which $f$ is
constant.

\subsection{Formal DNFs and CNFs}
Every DeMorgan circuit $F$ not only computes a unique Boolean
function,  but also
\emph{produces} (purely syntactically) a unique set of terms
as well as a unique set of clauses in a natural way.

Namely, at an input gate holding a literal or a constant, the unique produced term
and clause is this literal or constant.
The set of terms produced at an OR gate is
the union of the sets of terms produced at two gates entering this OR
gate. The set of terms produced at an AND gate consists of all terms
of the form $t=t_1\land t_2$, where $t_1$ is a term produced at one
of the two gates entering this AND gate, and $t_2$ is a term produced at the other
entering gate. The OR  of all terms produced at the output gate of $F$ is the \emph{formal DNF} of $F$

The set of clauses produced at the gates of a given circuit is defined dually by
interchanging the roles of OR and AND gates. Namely, the set of
clauses produced at an AND gate is the union of the sets of clauses
produced at the two gates entering this AND gate. The set of clauses
produced at an OR gate consists of all clauses of the form
$\cl=\cl_1\lor \cl_2$, where $\cl_1$ is a clause produced at one of the two gates
entering this OR gate, and $\cl_2$ is a clause produced at the other
entering gate. The AND  of all clauses produced
at the output gate of $F$ is the \emph{formal CNF} of $F$.

Let us stress and important point: when forming the formal DNFs or formal CNFs, all Boolean laws, except two, can be used to simplify the resulting formulas. The two exceptions are the annihilation laws $x\land\bar{x}=0$ and $x\lor\bar{x}=1$: they are \emph{not} used!
Thus, some produced terms may be zero-terms, and some produced clauses
may be one-clauses, that is, may contain a variable together with its
negation. For example, the formal DNF $xy\lor xz\lor y\bar{z}\lor
z\bar{z}$ produced by the circuit $F=(x\lor\bar{z})(y\lor z)$ contains
a zero-term $z\bar{z}$. These ``redundant'' terms and clauses have no
influence on the Boolean function computed by the circuit, but are
decisive in the context of hazards.

\subsection{Ternary logic}

In this paper, we ignore the electronic aspect of hazards, and stick
on their idealized, mathematical model as, for example, in
\cite{beredeson72,Eichelberger,yoeli,Ikenmeyer,komarath}.
The classical Kleene's three-valued ``strong logic of
indeterminacy''~\cite{kleene} extends the Boolean operations AND, OR
and NOT from the Boolean domain $\{0,1\}$ to the ternary domain
$\{0,\unc,1\}$, where the bits $0$ and $1$ are interpreted as
\emph{stable}, and the bit $\unc$ as \emph{unstable}:

\begin{equation}\label{eq:truth-tables}
  \mbox{
    \begin{tabular}{c|ccc}
      and & 0 & $\unc$ & 1\\
      \midrule
      0 & 0 & 0 & 0\\
      $\unc$ & 0 & $\unc$ & $\unc$\\
      1 & 0 & $\unc$ & 1
    \end{tabular}
    \qquad\qquad
    \begin{tabular}{c|ccc}
      or & 0 & $\unc$ & 1\\
      \midrule
      0 & 0 & $\unc$ & 1\\
      $\unc$ & $\unc$ & $\unc$ & 1\\
      1 & 1 & 1 & 1
    \end{tabular}
    \qquad\qquad
    \begin{tabular}{c|ccc}
      not & 0 & $\unc$ & 1\\
      \midrule
      & 1 & $\unc$ & 0
    \end{tabular}
  }
\end{equation}
Note that, if we define the unstable bit as $\unc=\tfrac{1}{2}$, then
these ternary operations turn into tropical operations: $x\land y
=\min(x,y)$, $x\lor y=\max(x,y)$ and $\bar{x}=1-x$.  It is easy to
verify that the system $(\{0,\unc,1\},\lor,\land)$ forms a
distributive lattice with zero element $0$ and universal element $1$;
see, for example, Yoeli and Rinon \cite{yoeli}. In particular, the
operations $\lor$ and $\land$ are associative, commutative and each distributes over the other.
Moreover, the elements $0$ and $1$ satisfy for every
$x\in\{0,\unc,1\}$: $x\land 0=0$, $x\land 1 = x$, $x\lor 0=x$ and
$x\lor 1=1$.  The absorbtion laws $x\lor xy=x$ and $x(x\lor y)=x$ as
well as the rules of de Morgan $\overline{x\lor
  y}=\bar{x}\land\bar{y}$ and $\overline{x\land y}=\bar{x}\lor\bar{y}$
also hold. The only difference from the Boolean algebra is that, over
the ternary domain~$\{0,\unc,1\}$, the annihilation laws
$x\land\bar{x}=0$ and $x\lor\bar{x}=1$ \emph{do not} hold: $0\land
\unc=0$ but $\unc\land\bar{\unc}=\unc\neq 0$, and $1\lor \unc=1$ but
$\unc\lor\bar{\unc}=\unc\neq 1$.

\begin{rem}
Since the annihilation laws $x\land\bar{x}=0$ and $x\lor\bar{x}=1$ are
\emph{not} used when constructing the formal DNF $D$ or the formal CNF
$C$ produced by a given circuit $F$, the ternary functions computed by
$D$ and $C$ coincide with that computed by the circuit $F$, that is,
$D(\tern)=C(\tern)=F(\tern)$ holds for every ternary vector
$\tern\in\{0,\unc,1\}^n$. This is a simple but important observation
which allows us to analyze the properties of ternary functions
$F:\{0,\unc,1\}^n\to\{0,\unc,1\}$ defined by DeMorgan circuits $F$ by
analyzing the properties of the formal DNFs and formal CNFs of these circuits.
\end{rem}

\begin{rem}\label{rem:mukaidono}
  One can equip the set $\{0,\unc,1\}^n$ of ternary vectors with a
  partial order $\Leq$, where $\tern\Leq\ttern$ means that the vector
  $\ttern$ is obtained from $\tern$ by replacing some unstable bits
  $\unc$ by stable bits $0$ or $1$.  Since the extensions \cref
  {eq:truth-tables} of gates AND, OR and NOT to the ternary domain
  $\{0,\unc,1\}$ are monotone with respect to $\Leq$, the function
  $F:\{0,\unc,1\}^n\to\{0,\unc,1\}$ computed by a DeMorgan circuit $F$
  is monotone with respect to $\Leq$. In particular, if $\tern\Leq
  \ttern$ and $F(\ttern)=\unc$, then also $F(\tern)=\unc$, and if $F(\tern)=\bit$ for a stable bit $\bit\in\{0,1\}$, then also $F(\ttern)=\bit$.
\end{rem}

\subsection{Hazards}
After the functions AND, OR, NOT computed at individual gates are
extended from the binary domain $\{0,1\}$ to the ternary domain
$\{0,\unc,1\}$ using the truth-tables \cref{eq:truth-tables}, every
DeMorgan  circuit $F$ computing a given Boolean
function $f:\{0,1\}^n\to\{0,1\}$ turns into a circuit computing some
ternary function $F:\{0,\unc,1\}^n\to\{0,\unc,1\}$, a ``ternary
extension'' of $f$, which coincides with $f$ on $\{0,1\}^n$. Even if
two circuits compute the same Boolean function, their ternary
extensions may be different. Whether a circuit $F$ is hazard-free or
not depends entirely on the properties of its ternary extension which,
in turn, depends on the specific form of the circuit~$F$. Namely, the
circuit $F$ has a hazard at some vector $\tern\in \{0,\unc,1\}^n$ if
the Boolean function $f$ computed by $F$ does not depend on the
unstable bits of~$\tern$, but still $F$ outputs the unstable bit
$\unc$ on the input~$\tern$. To be more specific, let us fix our
notation.

A \emph{resolution} of a ternary vector $\tern\in \{0,\unc,1\}^n$ is a
vector in $\{0,1\}^n$ obtained from~$\tern$ by replacing every
occurrence of the unstable bit $\unc$ by a stable bit $0$ or~$1$. The
\emph{subcube} defined by $\tern$ is the set
\[
\subcube=\big\{a\in\{0,1\}^n\colon \mbox{$a$ is a resolution of
  $\tern$}\big\}
\]
of all resolutions of~$\tern$; hence, $|\subcube|=2^m$, where $m$ is
the number of unstable bits $\unc$ in~$\tern$. If $\tern\in\{0,1\}^n$
(there are no unstable bits at all), then we let $\subcube=\{\tern\}$.
For a Boolean function $f$ and a set $A\subseteq \{0,1\}^n$, let $
f(A)=\{f(a)\colon a\in A\}\subseteq \{0,1\} $ denote the set of values
taken by $f$ on $A$. Hence, $f(A)=\{0\}$ iff $f(a)=0$ for all $a\in
A$, and $f(A)=\{1\}$ iff $f(a)=1$ for all $a\in A$. In particular,
$|f(A)|=1$ means that the function $f$ is constant on~$A$.

\begin{rem}\label{rem:no-switch}
If $F$ is a DeMorgan circuit, and $\tern\in\{0,\unc,1\}^n$, then
$F(\subcube)=\{\bit\}$ for $\bit\in\{0,1\}$ implies $F(\tern)\neq \bar{\bit}$, that is,  $F(\tern)\in\{\bit,\unc\}$.
This follows from \cref{rem:mukaidono}, but can be also shown directly. Suppose that $F(\subcube)=\{0\}$ but $F(\tern)=1$. Then $t(\tern)=1$ must hold for at least one term $t$ produced by $F$. Since $1\land\unc=\unc\neq 1$, this means that the vector $\tern$ evaluates to $1$ all literals of $t$.  But then
also every resolution of $a\in\subcube$ of $\tern$ evaluates these
literals to $1$, and we obtain $F(a)=t(a)=1$, a contradiction with
$F(a)=0$. If $F(\subcube)=\{1\}$, then $F(\tern)\in\{1,\unc\}$ follows by considering the clauses produced by~$F$.
\end{rem}

There are
several types of hazards---we will only consider the so-called
\emph{static logical hazards} \cite{beredeson72,Eichelberger,yoeli}.

\begin{dfn}[Hazards]\rm
  A circuit $F$ of $n$ variables has a \emph{hazard} at
  $\tern\in\{0,\unc,1\}^n$ if $F$ is constant on the subcube $\subcube$ but $F(\tern)=\unc$ holds. This is a $0$-\emph{hazard} if  $F(\subcube)=\{0\}$, and is a $1$-\emph{hazard} if $F(\subcube)=\{1\}$.
   A circuit is
  \emph{hazard-free} if it has a hazard at none of the inputs
  $\tern\in\{0,\unc,1\}^n$.
\end{dfn}

\begin{ex}\label{ex:no-haz}
Trivial examples of circuits with hazards are the two circuits $x\land\bar{x}$ and $x\lor\bar{x}$ computing the two constant functions $0$ and $1$. A less trivial example of such a circuit is  $F=xz\lor y\bar{z}$ computing the multiplexer function
$f(x,y,z)=xz\lor y\bar{z}$; we already considered this circuit in the introduction to give an intuition behind the hazards. This circuit has  a $1$-hazard at $\tern=(1,1,\unc)$:
  $F(\tern)=\unc\lor\bar{\unc}=\unc$ even though
  $f(1,1,0)=f(1,1,1)=1$. The circuit
  $H=(x\lor \bar{z})(y\lor z)$ for the same function has a $0$-hazard at $\tern=(0,0,\unc)$:
  $H(\tern)=\bar{\unc}\land\unc=\unc$ even though
  $f(0,0,0)=f(0,0,1)=0$.
\end{ex}

\begin{rem}
Let us stress that the mere fact that a circuit $F$ outputs $\unc$ on some $\tern\in\{0,\unc,1\}^n$ does not mean that $F$ has a hazard at $\tern$: for the latter to happen, the circuit must take the \emph{same} value on the entire Boolean subcube $\subcube$, that is, $|F(\subcube)|=1$ must also hold. In particular, if every two vectors in $f^{-1}(\bit)$ differ in at least two positions, then every DeMorgan circuit computing $f$ is free from $\bit$-hazards per se: if $\tern$ contains at least one $\unc$, then $F(\subcube)\neq\{\bit\}$.
So, for example, every circuit $F$ computing the parity function  $x_1\oplus\cdots\oplus x_n$ is hazard-free.
\end{rem}

A ``folklore'' observation is that $0$-hazards can only be introduced by zero-terms, and $1$-hazards can only be introduced by one-clauses. Recall that a circuit is \emph{monotone} if it has no negated variables as inputs; note that such circuits cannot produce any zero-terms or one-clauses.

\begin{prop}\label{prop:zero-terms}
  If a circuit $F$ produces no zero-terms, then $F$ has no
  $0$-hazards, and  if  $F$ produces no one-clauses, then $F$ has no
  $1$-hazards. In particular, monotone circuits are hazard-free.
\end{prop}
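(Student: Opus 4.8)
The plan is to establish both implications via their contrapositives: \emph{if $F$ has a $0$-hazard then $F$ produces a zero-term}, and, dually, \emph{if $F$ has a $1$-hazard then $F$ produces a one-clause}. Granting these, the final assertion about monotone circuits is immediate: a monotone circuit has only positive literals and the constants $0,1$ at its inputs, so an easy induction on the circuit (using the inductive definition of the formal DNF and formal CNF) shows that every produced term is an AND of positive literals — in particular not a zero-term — and every produced clause is an OR of positive literals — in particular not a one-clause; hence a monotone circuit has neither $0$- nor $1$-hazards.

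For the first implication, suppose $F$ has a $0$-hazard at some $\tern\in\{0,\unc,1\}^n$, i.e. $F(\subcube)=\{0\}$ while $F(\tern)=\unc$. I would pass to the formal DNF $D$ of $F$; since the annihilation laws are not used when forming $D$, we have $D(\tern)=F(\tern)=\unc$, so some term $t$ produced by $F$ satisfies $t(\tern)\neq 0$. If $t(\tern)=1$, then every literal of $t$ takes the value $1$ at $\tern$ (because $0\land x=0$ and $1\land\unc=\unc$, exactly as in \cref{rem:no-switch}); this is impossible for a zero-term, and for a non-zero $t$ it forces $t(a)=1$, hence $F(a)=D(a)=1$, for every resolution $a\in\subcube$, contradicting $F(\subcube)=\{0\}$. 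Therefore $t(\tern)=\unc$: no literal of $t$ is falsified by $\tern$, and at least one literal of $t$ lies on an unstable coordinate. Now I build a resolution $a$ of $\tern$ that satisfies every literal of $t$: on stable coordinates let $a$ agree with $\tern$ (every literal of $t$ on such a coordinate is then satisfied, since none was falsified); on an unstable coordinate $i$, set $a_i=1$ if $t$ contains $x_i$ only, $a_i=0$ if $t$ contains $\bar x_i$ only, and $a_i$ arbitrarily if $t$ contains neither. This construction succeeds unless $t$ contains both $x_i$ and $\bar x_i$ for some $i$ — but in that case $t$ is itself a zero-term produced by $F$, and we are done. Otherwise $t(a)=1$, so $F(a)=D(a)=1$ with $a\in\subcube$, again contradicting the hazard; so this case cannot occur, and $F$ must produce a zero-term.

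The second implication is the exact formal dual: work with the formal CNF $C$ instead of $D$, with clauses instead of terms, and with ``falsifies'' in place of ``satisfies''. From a $1$-hazard at $\tern$ one extracts a produced clause $c$ with $c(\tern)\neq 1$; the case $c(\tern)=0$ is excluded as in the clause-version of \cref{rem:no-switch}, so $c(\tern)=\unc$, and choosing the unstable coordinates of a resolution $a$ so as to falsify the remaining literals of $c$ either yields $F(a)=C(a)=0$ with $a\in\subcube$ (contradicting $F(\subcube)=\{1\}$) or reveals that $c$ contains some $x_i$ together with $\bar x_i$, i.e. $c$ is a one-clause.

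The only step needing genuine care — the ``hard part,'' such as it is — is the coordinate-by-coordinate construction of the resolution $a$: one must check that on stable coordinates the literals of $t$ appearing there are automatically satisfied (this is precisely the content of $t(\tern)\neq 0$, read off the ternary AND table in \cref{eq:truth-tables}), and, crucially, identify that the \emph{only} obstruction to completing the construction is a literal together with its negation occurring in $t$, which is exactly the definition of a zero-term. It is also worth recording explicitly that because the annihilation laws $x\land\bar x=0$ and $x\lor\bar x=1$ are never applied in forming $D$ and $C$, the offending zero-term (resp. one-clause) genuinely persists as a term (resp. clause) produced by $F$. Everything else is bookkeeping with the truth tables \cref{eq:truth-tables} and the inductive definition of the formal DNF/CNF.
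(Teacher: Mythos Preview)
Your proof is correct and follows essentially the same route as the paper: pass to the formal DNF, locate a term $t$ with $t(\tern)=\unc$, and construct a resolution of $\tern$ satisfying every literal of $t$ unless $t$ is a zero-term. The only cosmetic difference is that you frame it as a contrapositive and separately dispose of the case $t(\tern)=1$, whereas the paper notes at once that $D(\tern)=\unc$ together with $1\lor\unc=1$ already forces $t(\tern)=\unc$ for the chosen term, making that case superfluous.
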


\begin{proof}
  Let $D$ be the formal DNF of the circuit $F$, and suppose that all
  terms of~$D$ are non-zero terms. Assume to the contrary that the circuit
  $F$ has a $0$-hazard at some vector $\tern\in\{0,\unc,1\}^n$; hence,
  $F(\subcube)=\{0\}$ but $F(\tern)=\unc$. Since
  $D(\tern)=F(\tern)=\unc$ and $1\lor\unc=1\neq\unc$, there must be a term $t$ in $D$ with
  $t(\tern)=\unc$. Since $0\land \unc=0\neq\unc$, the vector $\tern$
  evaluates every literal of $t$ either to $1$ or to $\unc$.  Since
  $t$ has no variable together with its negation, we can
  evaluate every literal of $t$ to $1$. On every such resolution
  $a\in\subcube$ of $\tern$, we have $t(a)=1$ and, hence, also
  $F(a)=D(a)=1$, a contradiction with $F(\subcube)=\{0\}$.
  The proof of the second claim (for $1$-hazards) is dual by considering the formal CNF of~$F$.
\end{proof}

\begin{ex}\rm\label{ex:basic}
  Consider the circuit $F=x(y\lor z)\lor y\bar{z}$ computing the multiplexer
  function $f(x,y,z)=xz\lor y\bar{z}$ from \cref{ex:no-haz}. Since the circuit $F$ produces no zero-terms, it has no $0$-hazards, by \cref{prop:zero-terms}. On the other hand,
   the function $f$ can take value $1$ only if $x=z=1$ or $y=\bar{z}=1$ or $x=y=1$. But $F(1,\unc,1)=1\lor \unc=1\neq\unc$, $F(\unc,1,0)=\unc\lor 1=1\neq\unc$ and $F(1,1,\unc)=1\lor \unc=1\neq\unc$. So, $F$ has no $1$-hazards as well.
\end{ex}

\section{Hazard-free and monotone circuits}
\label{sec:monot}
Recall that a Boolean function $f:\{0,1\}^n\to\{0,1\}$ is
\emph{monotone} if $f(x)=1$ and $x\leq y$ imply $f(y)=1$, where $x\leq
y$ means that $x_i\leq y_i$ holds for all positions~$i$.  The
\emph{upwards closure} of a not necessarily monotone Boolean function
$f(x)$ is the monotone Boolean function
\[
\up{f}(x):=\bigvee_{z\leq x}f(z)\,.
\]
That is, $\up{f}(x)=1$ iff $f(z)=1$ holds for some vector $z\leq x$;
we use the term ``upwards'' because if $f$ accepts a vector $z$, then
$\up{f}$ accepts all ``larger'' vectors $x\geq z$. For example, if
$f(x)=x_1\oplus\cdots\oplus x_n$ is the parity function, then
$\up{f}(x)=x_1\lor\cdots\lor x_n$: if $x_i=1$ holds for at least one
$i$, then $f(z)=1$ for the vector $z\leq x$ with exactly one $1$ in
the $i$th position. Note that
$\up{f}=f$ holds for all monotone Boolean functions~$f$.
Let us also note that monotone Boolean functions corresponding to many NP-hard problems are upwards closures of relatively simple non-monotone Boolean functions. Consider, for example, a Boolean function $f(x)$ of
 $n=\binom{m}{2}$ variables such that
 $f(x)=1$ iff the $m$-vertex graph $G_x$ encoded by a $0$-$1$ vector $x$  consists of a complete graph on some $m/2$ vertices and $m/2$ isolated vertices. Then $\up{f}(x)=\ncliq(x)$  is the well known NP-complete clique function: $\ncliq(x)=1$ iff $G_x$ contains a complete subgraph on $m/2$ vertices.

We can view every DeMorgan circuit $F(x)$ computing a Boolean function
$f(x)$ of $n$ variables as a monotone circuit $H(x,y)$ on $2n$
variables with the property that $F(x)=H(x,\bar{x})$ holds for all
$x\in\{0,1\}^n$, where $\bar{x}=(\bar{x}_1,\ldots,\bar{x}_n)$ is the
complement of $x=(x_1,\ldots,x_n)$. The
\emph{monotone version} of the circuit $F(x)$ is the monotone circuit
$\pos{F}(x)=H(x,\vec{1})$ obtained by replacing every negated input
literal $\bar{x}_i$ with constant~$1$. For example, the
monotone version of the circuit $F=y\bar{z}\lor x(\bar{y}\lor \bar{x}y)$
is $\pos{F}= y\cdot 1\lor x(1\lor 1\cdot y)=x\lor y$.

\begin{rem}\label{rem:closure}
  Since the circuit $H(x,y)$ is monotone, and since the circuit
  $\pos{F}(x)=H(x,\vec{1})$ is obtained from $H$ by replacing with
  constant $1$ some of its inputs (namely, all negated input literals), we have $\pos{F}(x)\geq f(x)$.
  Since the circuit $\pos{F}(x)$ is monotone, we also have
  $\pos{F}(x)\geq \pos{F}(z)$ for every $z\leq x$. Thus,  $\pos{F}(x)\geq \up{f}(x)$ holds
  for all $x\in\{0,1\}^n$.
\end{rem}

\begin{thm}\label{thm1}
  Let $F$ be a DeMorgan circuit computing a Boolean function $f$.  If
  the circuit~$F$ has no $0$-hazards, then $\pos{F}$
  computes~$\up{f}$.
\end{thm}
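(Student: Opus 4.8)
The plan is to establish both inequalities $\pos{F}(x)\geq\up{f}(x)$ and $\pos{F}(x)\leq\up{f}(x)$ for every $x\in\{0,1\}^n$. The first is already available from \cref{rem:closure}, so the entire work goes into the converse, which I would prove in contrapositive form: if $\up{f}(x)=0$, then $\pos{F}(x)=0$.

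Fix $x$ with $\up{f}(x)=0$. The one real idea is to feed $F$ the ternary vector $\tern$ obtained from $x$ by keeping its $0$-coordinates and turning every $1$ of $x$ into the unstable bit $\unc$; that is, $\tern_i=\unc$ if $x_i=1$ and $\tern_i=0$ if $x_i=0$. With this choice the subcube $\subcube$ is exactly $\{z\in\{0,1\}^n:z\leq x\}$, so $\up{f}(x)=0$ is precisely the statement $f(\subcube)=\{0\}$, i.e.\ $F(\subcube)=\{0\}$. Since $F$ has no $0$-hazards, it cannot output $\unc$ on $\tern$, and then \cref{rem:no-switch} forces $F(\tern)=0$.

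Next I would pass to the monotone view $F(x)=H(x,\bar{x})$, so that $\pos{F}(x)=H(x,\vec{1})$ and, on ternary inputs, $F(\tern)=H(\tern,\bar{\tern})$ (the ternary NOT fixes $\unc$). I then compare the two $2n$-dimensional inputs coordinatewise under the order $\Leq$ of \cref{rem:mukaidono}: on a coordinate with $x_i=1$ one has $\tern_i=\bar{\tern}_i=\unc\Leq 1$, and on a coordinate with $x_i=0$ one has $\tern_i=0=x_i$ and $\bar{\tern}_i=1$; hence $(\tern,\bar{\tern})\Leq(x,\vec{1})$. As $H$ is a monotone DeMorgan circuit, its ternary extension is monotone with respect to $\Leq$ (\cref{rem:mukaidono}), so $F(\tern)=H(\tern,\bar{\tern})\Leq H(x,\vec{1})=\pos{F}(x)$. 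Since the left-hand side is $0$ and $\pos{F}(x)$ is a stable bit, $0\Leq\pos{F}(x)$ forces $\pos{F}(x)=0$, which is what was needed.

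I do not expect a genuine obstacle here: once the vector $\tern$ above is written down, the rest is bookkeeping with the three remarks already in the paper. The one point to be careful about is that $\tern$ must be chosen so that two things happen simultaneously --- its subcube is the full lower shadow $\{z\leq x\}$ of $x$ (so that $\up{f}(x)=0$ becomes $F(\subcube)=\{0\}$), and the DeMorgan input $(\tern,\bar{\tern})$ lands $\Leq$-below the monotone input $(x,\vec{1})$ (so that monotonicity of $H$ applies) --- and the displayed choice does exactly that. A less slick alternative avoids ternary logic and argues with the formal DNF of $F$: from $\pos{F}(x)=1$ one extracts a term produced by $F$ whose positive literals are all satisfied by $x$, sets $z_i=1$ precisely on those positions, and checks that this term evaluates to $1$ on $z\leq x$; here the hypothesis ``no $0$-hazards'' enters exactly to rule out the case where the chosen term is a zero-term, but handling that case cleanly makes this route somewhat longer.
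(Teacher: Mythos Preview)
Your argument is correct, and it takes a genuinely different route from the paper's. The paper argues via the formal DNF: from $\pos{F}(a)=1$ and $\up{f}(a)=0$ it extracts a term $t=\bigwedge_{i\in A}x_i\land\bigwedge_{i\in B}\bar{x}_i$ produced by $F$ with $\pos{t}(a)=1$ but $t(a)=0$, and then sets $\tern_i=\unc$ only on the positions $i\in B$ with $a_i=1$; this tailored $\tern$ makes $t(\tern)=\unc$ while keeping $\subcube$ inside the lower shadow of $a$, so $F$ has a $0$-hazard there. Your approach instead takes the ``maximal'' choice $\tern_i=\unc$ for \emph{every} $i$ with $x_i=1$, so that $\subcube$ is exactly the lower shadow of $x$; you then bypass the DNF entirely by invoking $\Leq$-monotonicity of the monotone circuit $H$ on the $2n$-dimensional comparison $(\tern,\bar{\tern})\Leq(x,\vec{1})$, together with \cref{rem:no-switch} and \cref{rem:mukaidono}. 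This is cleaner and more conceptual for \cref{thm1} itself. The paper's DNF route, on the other hand, yields finer information---it pinpoints a specific term whose positive factor is not an implicant of $\up{f}$---which is precisely what is reused in \cref{thm2} and \cref{cor:sufficient}; your alternative sketch at the end is essentially that argument.
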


\begin{proof}
  Assume that the circuit $\pos{F}$ does not compute~$\up{f}$. Then,
  by \cref{rem:closure}, there must be a vector $a\in\{0,1\}^n$ such
  that $\pos{F}(a)=1$ but $\up{f}(a)=0$.  The formal DNF $\pos{D}$ of the circuit $\pos{F}$ is obtained by replacing
  with constant $1$ every negated literal in the formal DNF $D$ of the circuit~$F$. Since $\pos{D}(a)=\pos{F}(a)=1$,
  there must be a term $t=\bigwedge_{i\in A}x_i\land \bigwedge_{i\in
    B}\bar{x}_i$ in the DNF~$D$ such that $\pos{t}(a)=1$ holds for its
  subterm $\pos{t}=\bigwedge_{i\in A}x_i$. From $D(a)=f(a)=0$, we have $t(a)=0$; hence, the set $B'=\{i\in B\colon a_i=1\}$ is nonempty.
 Take the ternary vector
  $\tern\in\{0,\unc,1\}^n$ with $\tern_i=\unc$ for all $i\in B'$, and
  $\tern_i=a_i$ otherwise.

  On this vector, we have $t(\tern)=1\land \unc=\unc$. Since the Boolean vector $a$ evaluates every term of $D$ to $0$, the ternary
   vector $\tern$ evaluates  every other term of $D$ to either $0$ or $\unc$. Hence, $F(\tern)=D(\tern)=\unc$.
   On the other
  hand, since the ternary vector $\tern$ has unstable bits $\unc$ only
  in positions where the binary vector $a$ has $1$s, every resolution
  $b\in\subcube$ of $\tern$ satisfies $b\leq a$. Since $\bigvee_{b\leq
    a}f(b)=\up{f}(a)=0$, we have  $f(b)=0$ for every
  resolution $b\in\subcube$ of $\tern$, meaning that
  $f(\subcube)=\{0\}$. Thus, the circuit $F$ has a $0$-hazard at
  $\alpha$.
\end{proof}

The following example shows that the converse of \cref{thm1} does not hold: the monotone version
  $\pos{F}$ of a circuit $F$ may compute $\up{f}$ even though the
  circuit $F$ has $0$-hazards as well as $1$-hazards.

\begin{ex}\label{ex:thm1}
  Consider the
  circuit $F=y\bar{z}\lor x(\bar{y}\lor \bar{x}y)$ computing the
  Boolean function $f(x,y,z)= x\bar{y}\lor y\bar{z}$. The upwards
  closure of $f$ is $\up{f}=x\lor y$. The monotone version $\pos{F}=
  y\cdot 1\lor x(1\lor 1\cdot y)=x\lor y$ of the circuit $F$ computes
  $\up{f}$. But on the input vector $\tern=(\unc,1,1)$, we have
  $F(\tern)= 0\lor \unc(0\lor \bar{\unc})=\unc\,\bar{\unc}=\unc$ even
  though $f(0,1,1)=f(1,1,1)=0$, while on the vector
  $\ttern=(1,\unc,0)$, we have $F(\ttern)=\unc \lor (\bar{\unc}\lor
  0)=\unc\lor\bar{\unc}=\unc$ even though $f(1,0,0)=f(1,1,0)=1$.
\end{ex}

By \cref{thm1}, hazard-freeness of a circuit $F$ computing a Boolean
function $f$ is a sufficient condition for $\pos{F}=\up{f}$ to
hold, but \cref{ex:thm1} shows that this condition is not
necessary. The following theorem shows what is actually necessary.
The \emph{positive factor} $\pos{t}$ of a term
$t$ is obtained by replacing every its negated literal with
constant~$1$.

\begin{thm}\label{thm2}
  Let $F$ be a DeMorgan circuit computing a Boolean function $f$. Then
  the circuit $\pos{F}$ computes $\up{f}$ if and only if the positive
  factor of every zero-term produced by $F$ is an implicant
  of~$\up{f}$.
\end{thm}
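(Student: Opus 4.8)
The plan is to prove both directions by working with the formal DNF $D$ of $F$ and its monotone counterpart $\pos{D}$, reusing the machinery already developed for \cref{thm1}. Write every term of $D$ as $t=\pos{t}\land n_t$, where $\pos{t}=\bigwedge_{i\in A}x_i$ is its positive factor and $n_t=\bigwedge_{i\in B}\bar{x}_i$ collects the negated literals; thus $\pos{D}=\bigvee_t \pos{t}$. Note at the outset that if $t$ is a non-zero term, then $\pos{t}$ is automatically an implicant of $f$ (and hence of $\up{f}$): any resolution forcing the literals of $\pos{t}$ to $1$ can be completed to a resolution satisfying all of $t$, on which $f$ evaluates to $1$. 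So the only terms whose positive factors might fail to be implicants of $\up{f}$ are the zero-terms, which is why the statement is phrased the way it is.

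For the ``only if'' direction, assume some zero-term $t=\pos{t}\land n_t$ produced by $F$ has $\pos{t}$ not an implicant of $\up{f}$: there is a Boolean $a$ with $\pos{t}(a)=1$ but $\up{f}(a)=0$. This is exactly the situation manufactured inside the proof of \cref{thm1}: from $\up{f}(a)=0$ we get $f(a)=0$, so $t(a)=0$, so since $\pos{t}(a)=1$ the set $B'=\{i\in B\colon a_i=1\}$ is nonempty; form $\tern$ by setting $\tern_i=\unc$ for $i\in B'$ and $\tern_i=a_i$ otherwise. Then $t(\tern)=\unc$ while every other term of $D$ evaluates to $0$ or $\unc$ on $\tern$ (because $a$ kills every term of $D$), so $F(\tern)=D(\tern)=\unc$; and every resolution $b\in\subcube$ satisfies $b\le a$, so $f(\subcube)=\{0\}$. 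Hence $F$ has a $0$-hazard at $\tern$, and by \cref{thm1} (contrapositive) $\pos{F}$ does not compute $\up{f}$.

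For the ``if'' direction, suppose the positive factor of every zero-term produced by $F$ is an implicant of $\up{f}$; we must show $\pos{F}=\up{f}$. By \cref{rem:closure} we already have $\pos{F}(x)\ge\up{f}(x)$ for all $x$, so it suffices to show $\pos{F}(a)=1\Rightarrow\up{f}(a)=1$. If $\pos{F}(a)=1$ then $\pos{D}(a)=1$, so $\pos{t}(a)=1$ for some term $t$ of $D$. If $t$ is a non-zero term, then by the opening observation $\pos{t}$ is an implicant of $f$, so $\pos{t}(a)=1$ forces $f(a)=1$, whence $\up{f}(a)=1$. If $t$ is a zero-term, then by hypothesis $\pos{t}$ is an implicant of $\up{f}$, so again $\pos{t}(a)=1$ forces $\up{f}(a)=1$. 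Either way $\up{f}(a)=1$, completing the direction.

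The main obstacle is making sure the ``only if'' reduction is set up so that \cref{thm1} applies cleanly — in particular checking that the constructed $\tern$ genuinely witnesses a $0$-hazard, i.e.\ that $F(\tern)=\unc$ really does follow from $t(\tern)=\unc$ together with the fact that $a$ annihilates $D$, and that $f$ is constantly $0$ on $\subcube$. The ``if'' direction is essentially bookkeeping once one notices that non-zero terms never cause trouble. One small point to state carefully: the positive factor $\pos{t}$ is by definition a non-zero term (it contains only positive literals), so the phrase ``$\pos{t}$ is an implicant of $\up{f}$'' is meaningful, and the degenerate cases (constant terms, $A=\emptyset$) are harmless since then $\pos{t}\equiv 1$ and the hypothesis forces $\up{f}\equiv 1$, which is consistent with $\pos{F}\equiv 1$.
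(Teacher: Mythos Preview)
Your ``only if'' direction contains a genuine logical error. You construct a $0$-hazard of $F$ and then write ``by \cref{thm1} (contrapositive) $\pos{F}$ does not compute $\up{f}$.'' But \cref{thm1} says: \emph{no $0$-hazard $\Rightarrow$ $\pos{F}=\up{f}$}; its contrapositive is \emph{$\pos{F}\neq\up{f}$ $\Rightarrow$ $F$ has a $0$-hazard}. You are using the \emph{converse} (a $0$-hazard $\Rightarrow$ $\pos{F}\neq\up{f}$), and the paper's \cref{ex:thm1} is an explicit counterexample to that converse: a circuit with a $0$-hazard whose monotone version nonetheless computes $\up{f}$. So the hazard you build proves nothing about $\pos{F}$. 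The fix is immediate and avoids hazards entirely: you already have a vector $a$ with $\pos{t}(a)=1$ and $\up{f}(a)=0$; since $\pos{t}$ is a term of $\pos{D}$, this gives $\pos{F}(a)=\pos{D}(a)=1\neq 0=\up{f}(a)$, done.

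There is also a smaller slip in the opening observation (reused in the ``if'' direction): for a non-zero term $t$, the positive factor $\pos{t}$ is \emph{not} in general an implicant of $f$ (take $f=t=x\bar{y}$, $\pos{t}=x$, $f(1,1)=0$). Your own justification---complete $a$ to some $b\le a$ with $t(b)=1$---only yields $f(b)=1$ for some $b\le a$, i.e.\ $\pos{t}\leq\up{f}$, which is exactly what you need; just drop the claim $\pos{t}\leq f$. With these two repairs your argument is correct. For comparison, the paper's own proof never mentions hazards: it computes $\up{t}$ term by term (equal to $\pos{t}$ for non-zero terms, identically $0$ for zero-terms), uses $\up{(g\lor h)}=\up{g}\lor\up{h}$ to get $\up{f}=\bigvee_{t\in T'}\pos{t}\leq\bigvee_{t\in T}\pos{t}=\pos{F}$, and reads off that equality holds iff $\pos{t}\leq\up{f}$ for each zero-term $t$.
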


In particular, $\pos{F}=\up{f}$ if $F$ produces no zero-terms.

\begin{proof}
  Let $D=\bigvee_{t\in T}t$ be the formal DNF of $F$. Then the formal
  DNF of the monotone version $\pos{F}$ of $F$ is the OR
  $\pos{D}=\bigvee_{t\in T}\pos{t}$ of positive factors of terms
  of~$D$.  Now take an arbitrary term $t=\bigwedge_{i\in
    A}x_i\land\bigwedge_{j\in B}\bar{x}_j$ of $D$. If $t$ is a
  zero-term ($A\cap B\neq\emptyset$), then it contains a subterm
  $x_i\bar{x}_i$ for some $i$. Since $x_i\bar{x}_i(b)=0$ holds for all
  vectors $b\in\{0,1\}^n$, $\up{t}(a)=0$ holds for all $a$ as well.
  Now suppose that $t$ is a non-zero term ($A\cap B=\emptyset$), and
  take an arbitrary vector $a\in\{0,1\}^n$. If $\up{t}(a)=1$, then
  $t(b)=1$ for some $b\leq a$. From $t(b)=1$ we have $b_i=1$ for all
  $i\in A$, and from $b\leq a$, we also have $a_i=1$ for all $i\in A$,
  that is, $\pos{t}(a)=1$. If $\pos{t}(a)=1$, then $t(b)=1$ holds for
  the vector $b\leq a$ with $b_i=a_i$ for all $i\in A$ and $b_i=0$ for
  all $i\not\in A$, and $\up{t}(a)=t(b)=1$ follows.

  Thus, for every term $t$ we have $\up{t}=0$ if $t$ is a zero-term,
  and $\up{t}=\pos{t}$ if $t$ is a non-zero term. Since $\up{(g\lor
    h)}=\up{g}\lor\up{h}$ holds for any Boolean functions
  $g,h:\{0,1\}^n\to\{0,1\}$, we obtain
  \begin{equation}\label{eq:posDNF}
    \up{f}=\bigvee_{t\in T}\up{t}=\bigvee_{t\in T'}\pos{t}\leq \bigvee_{t\in T}\pos{t}=\pos{F}\,,
  \end{equation}
  where $T'\subseteq T$ is the set of all non-zero terms of~$D$. By
  \cref{eq:posDNF}, the equality $\up{f}=\pos{F}$ holds if and only if
  $\pos{t}\leq \up{f}$ holds for every term $t\in T\setminus T'$, that
  is, if and only if the positive factor $\pos{t}$ of every zero-term
  $t$ of $D$ is an implicant of~$\up{f}$.
\end{proof}

 By \cref{prop:zero-terms}, if a circuit $F$ has a $0$-hazard, then $F$ must produce at least one zero-term.  \Cref{thm1,thm2} yield a partial converse: if $F$ produces a zero-term whose positive factor is not an implicant of the upwards closure of the Boolean function computed by $F$, then $F$ has a $0$-hazard.

\begin{cor}\label{cor:sufficient}
Let $F$ be a DeMorgan circuit computing a Boolean function~$f$.
If~$F$ produces a zero-term $t$ such that $\pos{t}\not\leq \up{f}$, then $F$ has a $0$-hazard.
\end{cor}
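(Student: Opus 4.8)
The plan is to obtain this as essentially a one-line consequence of \cref{thm1,thm2}, read in contrapositive form. First I would record a small preliminary observation: since $t$ is a zero-term, it contains some variable $x_i$ together with its negation $\bar{x}_i$, so the occurrence of $x_i$ survives when we pass to the positive factor $\pos{t}$ (only the negated literals are replaced by~$1$). Hence $\pos{t}$ is a genuine non-zero term — in particular it is not the constant~$1$ — so the phrase ``$\pos{t}$ is an implicant of $\up{f}$'' is meaningful, and the hypothesis $\pos{t}\not\leq\up{f}$ is precisely the statement that $\pos{t}$ is \emph{not} an implicant of $\up{f}$.

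Next I would apply the ``only if'' direction of \cref{thm2}: that theorem says $\pos{F}$ computes $\up{f}$ if and only if the positive factor of \emph{every} zero-term produced by $F$ is an implicant of $\up{f}$. Since, by assumption, $F$ produces at least one zero-term $t$ with $\pos{t}\not\leq\up{f}$, this criterion fails, and therefore $\pos{F}$ does \emph{not} compute $\up{f}$.

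Finally I would invoke the contrapositive of \cref{thm1}, which asserts that if $F$ has no $0$-hazard then $\pos{F}$ computes $\up{f}$; equivalently, if $\pos{F}$ fails to compute $\up{f}$, then $F$ has a $0$-hazard. Combining this with the conclusion of the previous step gives that $F$ has a $0$-hazard, completing the proof. There is no genuine obstacle here: the corollary is pure bookkeeping that chains \cref{thm2} (to deduce $\pos{F}\neq\up{f}$) with the contrapositive of \cref{thm1} (to deduce the hazard), and the only point needing care is the trivial remark above that the positive factor of a zero-term is always a non-zero term, so that ``implicant'' applies to it.
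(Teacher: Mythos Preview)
Your proposal is correct and follows exactly the approach the paper intends: the corollary is stated immediately after the sentence ``\Cref{thm1,thm2} yield a partial converse\ldots'', and your argument is precisely this chaining of the ``only if'' direction of \cref{thm2} with the contrapositive of \cref{thm1}. The preliminary remark that $\pos{t}$ is a genuine non-zero term is a harmless extra observation the paper does not make explicit.
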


\Cref{ex:thm1} shows that the converse of \cref{cor:sufficient} does not hold.

\section{Complexity bounds}
\label{sec:complexity}

For a Boolean function $f$, let $\uL{f}$ denote the minimum number of
gates in a DeMorgan circuit computing~$f$. By $\hL{f}$ and $\mL{f}$ we
denote the versions of this measure when restricted, respectively, to
hazard-free circuits and to monotone circuits. Finally, let $\zL{f}$
denote the minimum number of gates in a DeMorgan circuit that computes $f$ and produces no zero-terms. In the measure $\zpL{f}$, we
additionally require that the circuit must produce all prime implicants
of~$f$; hence, $\zL{f}\leq \zpL{f}$. We only state the following corollary for the circuit \emph{size} measures, but it also holds for the circuit \emph{depth} measures.

\begin{cor}\label{cor:lowerA}
  For every Boolean function $f$, we have
  \[
  \mL{\up{f}}\leq \hL{f}\leq \zpL{f}\qquad \mbox{and}\qquad \mL{\up{f}}\leq \zL{f}\,.
   \]
  If $f$ is monotone, then
  \[
  \mL{f}=\hL{f}=\zL{f}=\zpL{f}\,.
  \]
\end{cor}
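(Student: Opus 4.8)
The plan is to read the entire chain off three facts already available: \cref{thm1} (if a circuit $F$ for $f$ has no $0$-hazards then $\pos{F}$ computes $\up f$), \cref{prop:zero-terms} (absence of zero-terms rules out $0$-hazards), and one extra Huffman/Eichelberger-style observation — that a circuit which produces no zero-terms and produces all prime implicants of $f$ is hazard-free. One bookkeeping point recurs: $\pos{F}$ is obtained from $F$ by replacing certain \emph{input} literals with the \emph{input} constant $1$, so $\pos{F}$ has exactly the same non-input gates as $F$, hence the same size and the same depth; this is why the corollary also holds for the depth measures.

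For $\mL{\up f}\le\hL f$ I would take a minimal hazard-free circuit $F$ for $f$: it has no $0$-hazards, so \cref{thm1} gives that the monotone circuit $\pos{F}$ computes $\up f$, and since $\pos{F}$ has the same number of non-input gates as $F$ we obtain $\mL{\up f}\le\hL f$. The inequality $\mL{\up f}\le\zL f$ is identical: starting from a minimal zero-term-free circuit $F$ computing $f$, \cref{prop:zero-terms} gives that $F$ has no $0$-hazards, and then \cref{thm1} (equivalently, the remark after \cref{thm2}) gives that $\pos{F}$ computes $\up f$.

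The step $\hL f\le\zpL f$ is where the work sits. Let $F$ be a minimal circuit that computes $f$, produces no zero-terms, and produces all prime implicants of $f$; I would show $F$ is hazard-free. It has no $0$-hazards by \cref{prop:zero-terms}. If it had a $1$-hazard at $\tern$, then $f(\subcube)=\{1\}$ while $F(\tern)=\unc$. Let $t$ be the term consisting of the stable literals of $\tern$: it is a non-zero term with $t(b)=1$ exactly for $b\in\subcube$, hence $t\le f$, i.e. $t$ is an implicant of $f$, so some prime implicant $p$ of $f$ is a subterm of $t$. Every literal of $p$ is a stable literal of $\tern$, so $p(\tern)=1$; since $F$ produces $p$, the term $p$ appears in the formal DNF $D$ of $F$, and therefore $F(\tern)=D(\tern)=1\ne\unc$ — a contradiction. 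Hence $F$ is hazard-free and $\hL f\le\zpL f$. (This is exactly the classical argument of Huffman~\cite{huffman} and Eichelberger~\cite{Eichelberger}, which one could cite instead of repeating.)

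Finally, when $f$ is monotone, $\up f=f$, so the inequalities already proved read $\mL f\le\hL f\le\zpL f$ and $\mL f\le\zL f\le\zpL f$ (the last step using $\zL f\le\zpL f$). It remains to prove $\zpL f\le\mL f$: I would take a minimal monotone circuit $F$ for $f$, note it produces no zero-terms (no negated inputs), and check that its formal DNF $D$, a monotone DNF representing $f$, contains all prime implicants of $f$. Indeed, for a prime implicant $p=\bigwedge_{i\in A}x_i$ the $0$-$1$ vector $a$ supported on $A$ has $f(a)=1$, so some term $t$ of $D$ satisfies $t(a)=1$; since $D$ is monotone, $t$ is a conjunction of unnegated variables, all of which must lie in $A$, so $p\le t\le f$, whence $t=p$ by primality of $p$. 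Thus this $F$ witnesses $\zpL f\le\mL f$, closing the loop. I expect the two genuine ingredients to be this last fact and the $1$-hazard argument above; the rest is routine.
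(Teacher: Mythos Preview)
Your proof is correct and follows essentially the same approach as the paper's: the paper derives $\mL{\up f}\le\hL f$ from \cref{thm1}, $\mL{\up f}\le\zL f$ from \cref{thm2} (your route via \cref{prop:zero-terms} plus \cref{thm1} is equivalent, as you note), $\hL f\le\zpL f$ by citing Eichelberger's theorem (which you instead spell out), and closes the monotone case by the same ``every monotone circuit produces all prime implicants'' argument you give. Your explicit remark that $\pos{F}$ has the same non-input gates as $F$, hence the same size and depth, is a useful clarification the paper leaves implicit.
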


\begin{proof}
  The inequality $\mL{\up{f}}\leq \hL{f}$  follows from \cref{thm1}, and the inequality $\mL{\up{f}}\leq \zL{f}$ follows from \cref{thm2}.
  The inequality $\hL{f}\leq \zpL{f}$ follows from a classical result of Eichelberger~\cite[Theorem~2]{Eichelberger}: if a DeMorgan circuit
  $F$ computing a Boolean function $f$ produces no zero-terms, then $F$ is hazard-free if and only if the circuit~$F$ produces all prime implicants of~$f$.  Now let $f$ be a monotone Boolean function; hence, $\up{f}=f$.
  To show that then $\mL{f}=\hL{f}=\zL{f}=\zpL{f}$ holds, it is enough to  show that  $\zpL{f}\leq \mL{f}$
  holds. So, let $F$ be a monotone circuit of size $\mL{f}$
  computing~$f$. Since the circuit $F$ has no negated inputs,
  it cannot produce any zero-terms, and it is enough to show that
  every prime implicant of $f$ must be produced by~$F$. This a well
  known and easy to verify fact.

  Assume for a contradiction that some prime implicant
  $p=\bigwedge_{i\in S}x_i$ of~$f$ is not produced by $F$, and
  consider the vector $a\in\{0,1\}^n$ with $a_i=1$ for all $i\in S$
  and $a_i=0$ for all $i\in S$. On this vector, we have $f(a)=p(a)=1$.
  But since every term $t\neq p$ produced by $F$ must be an implicant
  of $f$, and since $p$ is a \emph{prime} implicant, $t$ must have a
  variable $x_i$ with $i\not\in S$. Thus, $t(a)=0$ holds for all terms
  $t$ produced by $F$. But then $F(a)=0\neq f(a)$, a contradiction.
\end{proof}

The lower bound $\hL{f}\geq \mL{\up{f}}$
was already shown by Ikenmeyer et al.~\cite{Ikenmeyer} as a special case of a more general result proved using different  arguments.
 Associate with every Boolean vector $x\in\{0,1\}^n$ the ternary vector $\xx$ whose $i$th position is $a_i$ if $x_i=0$, and is $\unc$ if $x_i=1$. That is, vector $x$ tells us which bits of $a$ are changed from stable to unstable. The \emph{hazard derivative}
$\der{F}{a}{x}$ of a ternary function $F:\{0,\unc,1\}^n\to\{0,\unc,1\}$
computable by a DeMorgan circuit at a  point $a\in\{0,1\}^n$ is defined by letting $\der{F}{a}{x}=0$ if $F(\xx)=F(a)$, and $\der{F}{a}{x}=1$ if $F(\xx)=\unc$; by \cref{rem:mukaidono}, there are only these two possibilities.
 This concept is extended to Boolean functions $f$ by letting $\der{f}{a}{x}=0$ iff $f(a\oplus z)=f(a)$ for all
$z\leq x$; here, $x\oplus z$ is the componentwise {\sf xor} of Boolean vectors $x$ and $z$. The function $d(x)=\der{f}{a}{x}$ is clearly monotone: if $x\leq y$, and if something holds for all $z\leq y$, then this also holds for all $z\leq x$.

The core of the entire argument in ~\cite{Ikenmeyer} is a chain rule for the hazard derivatives $\der{F}{a}{x}$. The authors then use this rule to
transform a given hazard-free circuit $F$ for~$f(x)$ into a monotone
circuit computing the hazard derivatives $\der{f}{a}{x}$ of~$f$ at all
points~$a$. The argument is reminiscent of that used by Baur and
Strassen~\cite{strassen} to compute all partial derivatives of a
multivariate polynomial by an arithmetic circuit. This leads to
the lower bound $\uL{f} \geq \mL{\der{f}{a}{x}}$ for every $a\in\{0,1\}^n$. If $f(\vec{0})=0$, then taking
$a=\vec{0}$ we obtain $\der{f}{\vec{0}}{x}=0$ iff $f(z)=0$ for all
$z\leq x$, which happens precisely when $\up{f}(x)=0$. Thus
$\der{f}{\vec{0}}{x}=\bigvee_{z\leq x} f(x)=\up{f}(x)$. Note that if $f(\vec{0})=1$, then $\up{f}=1$, and the lower bound $\hL{f}\geq \mL{1}$ trivially holds.

\Cref{thm1} gives an alternative, short and direct proof of the lower
bound $\hL{f}\geq \mL{\up{f}}$ using the mere definition of hazards:
if $\pos{F}(a)\neq \up{f}(a)$, then the circuit $F$ must produce a
term $t$ such that $\pos{t}(a)=1$ but $t(a)=0$, and the circuit $F$ has a $0$-hazard at the ternary vector $\tern\in\{0,\unc,1\}^n$ with $\tern_i=\unc$ if $a_i=1$ and $\bar{x}_i\in t$, and $\tern_i=a_i$ otherwise.
Moreover, in this theorem, the desired monotone circuit $\pos{F}$ computing $\up{f}$ is obtained from a hazard-free circuit $F$ computing $f$ by just replacing all negated inputs of $F$ with constant~$1$: no further transformations of the circuit itself are necessary.
Thus, for monotone Boolean functions
$f$, \cref{thm1} tells us a bit more than the mere equality
$\hL{f}=\mL{f}$: it shows that not only hazard-free and monotone
circuit \emph{complexities} for monotone Boolean functions $f$ do
coincide but, in fact, that \emph{every} minimal hazard-free circuit for $f$ is a monotone circuit \emph{itself}, that is, does not use negated
input variables to compute its values.

\section{Complexity gaps}
\label{sec:gaps}
Together with already known lover bounds on the monotone circuit
complexity, the lower bound $\hL{f}\geq \mL{\up{f}}$ implies that the
gap $\hL{f}/\uL{f}$ can be super-polynomial and even exponential.
Such gaps were shown in~\cite{Ikenmeyer}, when $f$ is either the
logical permanent~\cite{razb-perm} or the logical determinant, or the
Tardos function~\cite{tardos}. However, the known circuits for these
functions demonstrating that $\uL{f}$ is polynomial are far from
being trivial. Actually, except for determinant~\cite{berkowitz}, we even do not have
\emph{explicit} constructions of these circuits---we only have general
algorithms: \cite{Kuhn55,hopcroft73} for logical permanent and
\cite{GLS81} for the Tardos function.

We now show that a super-polynomial gap $\hL{f}/\uL{f}$ is actually
achieved on a very simple \emph{exact perfect matching}
function~$f_n$ of $n=m^2$ variables. Inputs are Boolean  $m\times m$ matrices
$x=(x_{i,j})$, and $f_n(x)=1$ if and only if $x$ is permutation
matrix, that is, if every row and every column of $x$ has exactly
one~$1$. By viewing $x$ as the adjacency matrix of a bipartite $n\times n$ graph $G_x$, we have  $f_n(x)=1$ if and only if $G_x$ is a perfect matching.

\begin{thm}\label{thm3}
  The exact perfect matching function $f_n$ can be computed by circuit of size $O(n)$ and
  depth $O(\log n)$, but any hazard-free circuit computing $f_n$ must
  have size $n^{\Omega(\log n)}$ and depth~$\Omega(n)$.
\end{thm}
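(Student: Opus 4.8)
### Proof proposal

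\textbf{Upper bound.} The plan is to first argue the easy direction: $f_n$ has a small unrestricted circuit of logarithmic depth. The condition ``every row and every column has exactly one $1$'' is a conjunction of $2m$ symmetric conditions, each of the form $\exk{1}(x_{i,1},\ldots,x_{i,m})$ (exactly one of $m$ bits is set). Each exactly-one predicate on $m$ bits can be computed by a circuit of size $O(m)$ and depth $O(\log m)$ (for instance, compute the sum of the bits in binary by a balanced addition tree and test that it equals $1$). Taking an AND over the $2m$ rows and columns costs another $O(m)$ gates and $O(\log m)$ depth, so in total $O(m^2)=O(n)$ gates and $O(\log m)=O(\log n)$ depth. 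This part is routine.

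\textbf{Lower bound strategy.} For the hazard-free lower bound, I would invoke \cref{cor:lowerA}: any hazard-free circuit $F$ for $f_n$ satisfies $\hL{f_n}\ge \mL{\up{f_n}}$ in both size and depth. So it suffices to identify $\up{f_n}$ and cite a strong monotone lower bound for it. The key observation is that the upwards closure of the exact-perfect-matching function is exactly the (bipartite) \emph{perfect matching} / \emph{permanent}-type function: $\up{f_n}(x)=1$ iff $x$ dominates some permutation matrix, i.e. iff the bipartite graph $G_x$ \emph{contains} a perfect matching. Indeed, $z\le x$ with $f_n(z)=1$ means $z$ is a permutation matrix whose $1$-entries are among those of $x$. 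Once this identification is made, the result follows from Razborov's celebrated super-polynomial lower bound on the monotone circuit size of the logical permanent~\cite{razb-perm}, which gives $\mL{\up{f_n}}\ge n^{\Omega(\log n)}$ in size; for the depth bound $\Omega(n)$ one uses the known linear monotone-depth (equivalently, monotone-formula or communication) lower bound for bipartite perfect matching, e.g. via the Karchmer--Wigderson framework applied to matching. Plugging into \cref{cor:lowerA} yields $\hL{f_n}\ge n^{\Omega(\log n)}$ and hazard-free depth $\Omega(n)$.

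\textbf{Main obstacle.} The genuinely delicate point is the identification $\up{f_n}=\textsc{PerfectMatching}$, and more precisely making sure the monotone lower bounds being cited are for \emph{exactly} this monotone function and not a superficially different variant (e.g. the clique-like encoding used in Razborov's original permanent bound, versus the direct bipartite-matching formulation). I would spell out the equivalence carefully: $G_x$ contains a perfect matching iff some permutation submatrix $z\le x$ exists iff $\bigvee_{z\le x} f_n(z)=1$, and then note that the monotone complexity of bipartite perfect matching is at least that of the logical permanent up to the standard reduction, so Razborov's $n^{\Omega(\log n)}$ bound transfers. A secondary, smaller obstacle is the depth lower bound: one must be slightly careful that $\Omega(n)$ monotone depth for matching is available off the shelf (it is, via Raz--Wigderson type arguments on the matching KW-game, or simply because a monotone formula of depth $d$ has size $2^{O(d)}$, and combined with a suitable monotone \emph{formula}-size bound this already forces depth $\Omega(\log^2 n)$ — but to get the stated $\Omega(n)$ one should cite the genuine monotone-depth result). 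I would state these citations precisely and let \cref{cor:lowerA} do the rest.
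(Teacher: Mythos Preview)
Your proposal is correct and follows essentially the same route as the paper: identify $\up{f_n}$ with the logical permanent (bipartite perfect matching), invoke \cref{cor:lowerA}, and cite Razborov for the size bound and Raz--Wigderson for the $\Omega(n)$ monotone depth bound; the upper bound via exactly-one predicates per row and column is likewise the paper's argument. One small simplification: the logical permanent \emph{is} the bipartite perfect-matching function, so no reduction between them is needed, and the paper simply cites \cite[Theorem~4.2]{RW92} directly for the depth lower bound rather than going through the Karchmer--Wigderson framework.
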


\begin{proof}
  The \emph{logical permanent} function $\uper$ accepts a Boolean
  $m\times m$ matrix $x$ iff $f_n(z)=1$ holds for at least one matrix $z\leq
  x$. Hence, $\uper=\up{f}_n$ is the upwards closure of $f_n$.
  Razborov~\cite{razb-perm} has shown that any monotone circuit
  computing $\uper$ must have size $n^{\Omega(\log n)}$, and Raz and
  Wigderson~\cite[Theorem~4.2]{RW92} have shown that any monotone
  circuit computing $\uper$ has depth $\Omega(n)$. Together with
  \cref{cor:lowerA}, this implies that any hazard-free circuit computing
  $f_n$ must have size $n^{\Omega(\log n)}$ and depth~$\Omega(n)$.
On the other hand, every \emph{exact-$k$} function
\[
\mbox{$\exk{k}(x_1,\ldots,x_m)=1$ if and only if $x_1+\cdots+x_m=k$,}
\]
which accepts an input vector iff it has exactly a given number $k$ of $1$s, is symmetric, and it is known (see, e.g. \cite[Chapter~3.4]{wegener}) that every symmetric Boolean function of $m$ variables can be computed by a DeMorgan circuit of size $O(m)$ and depth $O(\log m)$. So, the exact perfect matching
function~$f_n$  can be computed by a circuit of size $O(m^2)=O(n)$ and depth $O(\log n)$.
\end{proof}

\begin{rem}
By allowing slightly larger than linear number of gates,
the exact per\-mu\-ta\-tion function~$f_n$ can be directly computed by a trivial circuit without using circuits for symmetric functions. 
\ignore{Consider the circuits:
\begin{enumerate}
\item[] $R_{i,j}$ = AND of $x_{i,j}$ and all $\bar{x}_{i,l}$ for $l\neq j$;
\item[] $C_{i,j}$ = AND of $x_{i,j}$ and all $\bar{x}_{k,j}$ for $k\neq i$;
\item[] $R_i=R_{i,1}\lor R_{i,2}\lor\cdots\lor R_{i,n}$;
\item[] $C_j=C_{1,j}\lor C_{2,j}\lor\cdots\lor C_{n,j}$;
\item[] $F=R_1\land\cdots\land R_n\land C_1\land\cdots\land C_n$.
\end{enumerate}
Note that $R_{i,j}(x)=1$ iff the $i$th row of $x$ has exactly one $1$ in the $j$th column, and  $C_{i,j}(x)=1$ iff the $j$th column of $x$ has exactly one $1$ in the $i$th row. So, $R_i(x)=1$ iff the $i$th row of $x$ has exactly one $1$, and $C_j(x)=1$ iff the $j$th column of $x$ has exactly one $1$. Thus, the circuit $F$ (which, actually, is a formula) computes~$f_n$. The
  size of this circuit is $O(m^3)=O(n^{3/2})$ and the depth is $O(\log n)$. If unbounded fanin gates are allowed, when the size of $F$ is $O(m^2)=O(n)$ and the depth is~$3$.

.....................
}
Consider the circuit $F=F_1\land F_2$, where
  \[
  F_1=\bigwedge_{i=1}^m\bigvee_{j=1}^m x_{i,j}\land \bigwedge_{k\neq
    j}\bar{x}_{i,k}\ \mbox{ and }\
  F_2=\bigwedge_{j=1}^m\bigvee_{i=1}^m x_{i,j}\land \bigwedge_{l\neq
    i}\bar{x}_{l,j}\,.
  \]
  Note that $F_1(x)=1$ iff every row of $x$ has exactly one $1$, and
  $F_2(x)=1$ iff every column of $x$ has exactly one $1$, meaning that
  the circuit $F$ (which, actually, is a formula) computes~$f_n$. The
  size of this circuit is $O(m^3)=O(n^{3/2})$ and the depth is $O(\log n)$. If unbounded fanin gates are allowed, when the size of $F$ is $O(m^2)=O(n)$ and the depth is~$3$.
\end{rem}

By using less trivial circuits, one can increase the gap from
  super-polynomial to exponential.
The \emph{exact $k$-clique} function $\ecliq$ has $n=\binom{m}{2}$ variables corresponding to the edges of the complete graph $K_m$ on $m$ vertices. Every assignment $x$ of $0/1$ values to these variables specifies a subgraph $G_x$ of $K_m$, and $\ecliq(x)=1$ iff $G_x$ is an exact $k$-clique, that is, consists of a complete graph on some $k$ vertices and $m-k$ isolated vertices.

\begin{thm}\label{thm3b}
For every $1\leq k\leq m$, the exact $k$-clique function $\ecliq$ can be computed by a DeMorgan circuit of size $O(n)$ and depth $O(\log n)$, but for $k=\lfloor \tfrac{1}{4}(m/\log m)^{1/3}\rfloor$, any hazard-free circuit computing $\ecliq$ must
  have size exponential in $\Omega\left((n/\log n)^{1/6}\right)$ and depth $\Omega\left((n/\log n)^{1/6}\right)$.
\end{thm}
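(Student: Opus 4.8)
The argument splits into a direct construction of a small circuit for $\ecliq$ and a reduction of its hazard-free complexity to the monotone complexity of the ordinary clique function; the second part is entirely analogous to the proof of \cref{thm3}, with the permanent replaced by the clique function.

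\emph{The upper bound.} The plan is to observe that a graph $G_x$ on the vertex set $\{1,\dots,m\}$ is an exact $k$-clique if and only if (i) every vertex of $G_x$ has degree $0$ or $k-1$, and (ii) $G_x$ has exactly $\binom{k}{2}$ edges. Indeed, if (i) and (ii) hold and $k\ge 2$, then the non-isolated vertices of $G_x$ induce a $(k-1)$-regular graph; by (ii) this graph has $\binom{k}{2}$ edges, hence exactly $k$ vertices, and is therefore $K_k$ (the case $k=1$ is trivial). Now condition (ii) is the symmetric function $\exk{\binom{k}{2}}$ of all $n$ variables, while for each of the $m$ vertices $v$ the condition ``$\deg_v\in\{0,k-1\}$'' is a symmetric function of the $m-1$ variables indexing the edges at $v$. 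Since every symmetric Boolean function of $N$ variables has a DeMorgan circuit of size $O(N)$ and depth $O(\log N)$, taking the AND of these $m+1$ subcircuits yields a circuit for $\ecliq$ of size $O(m^2)+O(n)+O(m)=O(n)$ and depth $O(\log n)$.

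\emph{The lower bound.} First I would identify the upwards closure of $\ecliq$: we have $\up{\ecliq}(x)=1$ iff $\ecliq(z)=1$ for some $z\le x$, that is, iff $G_x$ contains all edges of a clique on some $k$ vertices; hence $\up{\ecliq}=\ncliq_{m,k}$ is precisely the ordinary $k$-clique function on $m$-vertex graphs. By \cref{cor:lowerA}, in both its size and its depth version, $\hL{\ecliq}\ge\mL{\ncliq_{m,k}}$. It then remains to invoke the classical monotone circuit lower bounds for the clique function --- the monotone size lower bound of Razborov, extended to larger cliques by Alon and Boppana, and the monotone depth lower bound in the spirit of Raz and Wigderson~\cite{RW92} --- and to substitute $k=\lfloor\tfrac14(m/\log m)^{1/3}\rfloor$ together with $m=\Theta(\sqrt{n})$. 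This value of $k$ lies within the range where these monotone bounds are non-trivial, and after rewriting the resulting quantities in terms of $n=\binom{m}{2}$ one obtains the stated size $\exp\bigl(\Omega((n/\log n)^{1/6})\bigr)$ and depth $\Omega((n/\log n)^{1/6})$.

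\emph{Where the difficulty lies.} Conceptually there is no real obstacle: once one observes that $\up{\ecliq}$ is the plain clique function, \cref{thm3b} follows from \cref{cor:lowerA} and off-the-shelf monotone clique lower bounds, just as \cref{thm3} followed from the monotone permanent bounds. The strength of the conclusion is therefore capped entirely by the quality of known monotone clique lower bounds, and the only genuinely delicate step is the quantitative bookkeeping: $k$ must be taken large enough that, after passing from $m$ vertices to $n=\binom{m}{2}$ edge variables, the exponent in the monotone size bound (and in the monotone depth bound) reaches $(n/\log n)^{1/6}$, yet small enough that the Alon--Boppana-type bounds still apply and that $\up{\ecliq}$ is still the $k$-clique function. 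Balancing these constraints is what forces the particular choice $k=\lfloor\tfrac14(m/\log m)^{1/3}\rfloor$, and carrying out that optimisation is the part that needs care.
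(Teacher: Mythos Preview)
Your proposal is correct and follows essentially the same route as the paper: identify $\up{\ecliq}=\cliq$, apply \cref{cor:lowerA}, and invoke known monotone clique lower bounds; for the upper bound, reduce to symmetric functions via a degree-and-edge-count characterization. Two small points: the paper's characterization is the equivalent ``exactly $k$ vertices of degree $k-1$ and exactly $\binom{k}{2}$ edges,'' and for the monotone \emph{depth} bound on clique the paper cites Goldmann and H\aa stad~\cite{goldmann} rather than Raz--Wigderson (whose result is for matching).
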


\begin{proof}
The upwards closure $\up{\ecliq}$ of $\ecliq$ is the well-known \emph{$k$-clique function}: $\cliq(x)=1$ iff $G_x$ contains a $k$-clique, that is, a complete subgraph on $k$ vertices. Alon and Boppana~\cite[Theorem~3.9]{AB87} have show that, for $k=\lfloor \tfrac{1}{4}(m/\log m)^{1/3}\rfloor$,  every monotone circuit computing $\cliq$ must have at least $\exp\left((n/\log n)^{1/6}\right)$ gates. For this choice of $k$, the lower bound proved by Goldmann and
H\aa stad~\cite[Theorem~3]{goldmann} implies that every monotone circuit for $\cliq$ must have depth at least a constant times $(n/\log n)^{1/6}$.
By \cref{cor:lowerA},
every hazard-free circuit computing $\ecliq$ must have at least so large size and depth.

To show the upper bounds, observe that a subgraph of $K_m$ on $\{1,\ldots,m\}$ is an exact $k$-clique iff it has exactly $k$ vertices of degree $k-1$ and $k(k-1)/2$ edges in total. We first compute the values  $y_i=\exk{k-1}(x_{i,1},\ldots,x_{i,i-1},x_{i,i+1},\ldots,x_{i,m})$ for all vertices $i$ of $K_m$. Since $y_i=1$ iff the vertex $i$ has degree $k-1$, the circuit $F(x)=\exk{k(k-1)/2}(x)\land \exk{k}(y_1,\ldots,y_m)$ computes $\ecliq$.
The size of the circuit $F$ is $O(m^2)=O(n)$ and the depth is $O(\log n)$.
\end{proof}

\begin{rem}\label{rem:negations}
An intuitive explanation for large gaps between the sizes of hazard-free and unrestricted circuits is
given by \cref{cor:sufficient}. In an unrestricted circuit $F$ computing the exact $k$-clique function $\ecliq$, we have no restrictions on the form of  produced zero-terms. However, if $F$ is hazard-free, then \cref{cor:sufficient} implies that every produced zero-term $t$ must have the following property: the graph encoded by the unnegated variables of $t$ must contain a $k$-clique. This is a severe restriction on the usage of negations which makes hazard-free circuits almost as weak as monotone circuits.
\end{rem}

\section{Structure of hazards}
\label{sec:struct}
A classical result of Eichelberger~\cite[Theorem~2]{Eichelberger} states that if a DNF representing a Boolean function $f$ has no zero-terms, then it is hazard-free if and only if it contains all prime implicants of $f$ as terms.   The goal of this section is to remove the ``no zero-terms produced'' restriction from Eichelberger's theorem, and to establish further structural properties of hazards.

\subsection{Ternary vectors as terms and clauses}
It will be convenient to identify implicants and implicates of boolean functions with ternary vectors.  Namely, associate with every ternary vector $\tern\in\{0,\unc,1\}^n$ the following term and clause:
\[
\terntr:= \bigwedge_{i: \tern_i\neq\unc} x_i^{\tern_i}\ \ \mbox{ and }\
\ \terncl:=\bigvee_{i: \tern_i\neq\unc}x_i^{1-\tern_i}\,.
\]
 For example, if
$\tern=(1,\unc,0,\unc)$, then $\terntr=x_1\bar{x}_3$ and
$\terncl=\bar{x}_1\lor x_3$.  Note that $\terntr(\tern)=1$ and $\terncl(\tern)=0$. So, the terms and clauses associated with vectors $\tern\in\{0,\unc,1\}^n$ define the subcube~$\subcube$:
\[
\terntr^{-1}(1)=\subcube=\terncl^{-1}(0)\,,
\]
where, as customary, $f^{-1}(\bit)=\left\{a\in\{0,1\}^n\colon f(a)=\bit \right\}$.
We say that a ternary vector $\tern\in\{0,\unc,1\}^n$ is a
$1$-\emph{witness} of a Boolean function $f$ if the term $\terntr$ is
an implicant of $f$, that is, if $\terntr\leq f$ holds, and $\tern$ is
a \emph{prime $1$-witness} of $f$ if the term $\terntr$ is such.
Dually, $\tern\in\{0,\unc,1\}^n$ is a $0$-\emph{witness} of $f$ if the
clause $\terncl$ is an implicate of $f$, that is, if $f\leq \terncl$
holds, and $\tern$ is a \emph{prime $0$-witness} of $f$ if the clause
$\terncl$ is such. Since $\terntr\leq f$ iff $\terntr^{-1}(1)\subseteq f^{-1}(1)$. and $f\leq \terncl$ iff $\terncl^{-1}(0)\subseteq f^{-1}(0)$, for every $\bit\in\{0,1\}$, we have
\[
\mbox{$\tern\in\{0,\unc,1\}^n$ is an $\bit$-witness of $f$ if and only if $f(\subcube)=\{\bit\}$.}
\]
Hence, a circuit $F$ computing a
Boolean function $f$ has a $\bit$-hazard iff $F(\tern)=\unc$ holds for
some $\bit$-witness $\tern\in\{0,\unc,1\}^n$ of $f$. It is
almost immediate that it is enough to only consider \emph{prime} witnesses.

\begin{prop}\label{prop1}
  Let $F$ be a DeMorgan circuit computing a Boolean function~$f$, and $\bit\in\{0,1\}$. If $F$ has a $\bit$-hazard, then $F$ has an $\bit$-hazard at some prime $\bit$-witness of~$f$.
\end{prop}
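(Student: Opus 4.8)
The plan is to take a witness for the hazard and enlarge its subcube until it becomes maximal; the $\Leq$-monotonicity of $F$ will then force the output on the enlarged vector to remain $\unc$. Concretely, suppose $F$ has an $\bit$-hazard at some $\tern\in\{0,\unc,1\}^n$, so that $F(\subcube)=\{\bit\}$ and $F(\tern)=\unc$. Since $F$ agrees with $f$ on Boolean inputs, $f(\subcube)=\{\bit\}$, and by the equivalence recorded just before the proposition this says exactly that $\tern$ is an $\bit$-witness of~$f$.

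Next I would pass to a maximal witness below $\tern$. Consider the family, nonempty since it contains $\tern$, of all $\bit$-witnesses $\ttern$ of $f$ with $\ttern\Leq\tern$, and choose one, $\ttern$, for which the subcube $\tsubcube$ is inclusion-maximal; equivalently, $\tterntr$ is a minimal implicant of $f$ when $\bit=1$, and $c_{\ttern}$ is a minimal implicate when $\bit=0$. I claim $\ttern$ is a \emph{prime} $\bit$-witness. Take $\bit=1$ (the case $\bit=0$ is dual, using clauses and implicates). If $\tterntr$ were not prime, some literal could be deleted from it to yield a proper subterm $t_\gamma$ that is still an implicant of~$f$, where $\gamma$ is obtained from $\ttern$ by turning the corresponding (stable) coordinate into $\unc$. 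Because $\ttern\Leq\tern$, every $\unc$-coordinate of $\tern$ is already a $\unc$-coordinate of $\ttern$; in particular the coordinate we just changed was stable in $\tern$ too. Checking coordinates one by one then gives $\gamma\Leq\tern$, so $\gamma$ lies in the family while $A_\gamma\supsetneq\tsubcube$, contradicting maximality. Hence $\tterntr$ is a prime implicant, that is, $\ttern$ is a prime $\bit$-witness of $f$.

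Finally I would read off the hazard at $\ttern$. Since $\ttern$ is still an $\bit$-witness, $f(\tsubcube)=\{\bit\}$, hence $F(\tsubcube)=\{\bit\}$. And by the $\Leq$-monotonicity of the ternary function computed by $F$ (\cref{rem:mukaidono}), from $\ttern\Leq\tern$ and $F(\tern)=\unc$ we conclude that $F(\ttern)$ cannot be a stable bit (a stable value at $\ttern$ would force the same stable value at $\tern$), so $F(\ttern)=\unc$. Therefore $F$ has an $\bit$-hazard at the prime $\bit$-witness~$\ttern$. The only step I expect to require any care is the verification that the inclusion-maximal element of the family is genuinely a \emph{prime} witness rather than merely maximal within the family; this is exactly where the observation that all members of the family share the $\unc$-coordinates of $\tern$ is used. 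Everything else is an immediate unwinding of the definitions together with \cref{rem:mukaidono}.
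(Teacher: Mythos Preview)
Your proof is correct and follows essentially the same approach as the paper: both pass from the given $\bit$-witness $\tern$ to a $\Leq$-smaller prime $\bit$-witness $\ttern$ and then invoke \cref{rem:mukaidono} to conclude $F(\ttern)=\unc$. The only cosmetic difference is that the paper appeals directly to the standard fact that every implicant contains a prime implicant as a subterm, whereas you rederive this via a maximality argument within the family $\{\ttern\Leq\tern: \ttern\text{ an }\bit\text{-witness}\}$; note, incidentally, that your verification $\gamma\Leq\tern$ follows immediately from transitivity ($\gamma\Leq\ttern\Leq\tern$), so the remark about the changed coordinate being stable in~$\tern$ is true but unnecessary.
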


\begin{proof}
We only show the case $\bit=1$; the case $\bit=0$ is similar by considering the clause $\terncl$ instead of the term $\terntr$.
Suppose that the circuit $F$ has a $1$-hazard at
  some vector $\tern\in\{0,\unc,1\}^n$; hence, $f(\subcube)=\{1\}$ but
  $F(\tern)=\unc$. Since $\terntr(a)=1$ can only hold if $a\in
  \subcube$, we have $\terntr\leq f$, that is, $\terntr$ is an
  implicant of $f$. Then $\terntr$ must contain some \emph{prime}
  implicant $t$ of $f$ as a (not necessarily proper) subterm. This
  subterm is of the form $t=\tterntr$ for the vector
  $\ttern\in\{0,\unc,1\}^n$ obtained from $\tern$ by switching some
  stable bits to $\unc$. Since $\tterntr$ is a prime implicant of $f$,
  the vector $\ttern$ is a prime $1$-witness of $f$, and it remains to
  show that $f(\tsubcube)=\{1\}$ and $F(\ttern)=\unc$ hold. Since
  $\tterntr\leq f$, and since $\tterntr(a)=1$ can only hold if $a\in
  \tsubcube$, the equality $f(\tsubcube)=\{1\}$ follows. On the other
  hand, replacing stable bits $0/1$ by the unstable bit $\unc$ in the
  input vector $\tern$ cannot change the unstable output $\unc$ of
  DeMorgan circuit to a stable output $0$ or $1$ (see
  \cref{rem:mukaidono,rem:no-switch}).  So, $F(\tern)=\unc$ implies
  $F(\ttern)=\unc$, as desired.
\end{proof}

\subsection{Structure of $1$-hazards}
The following theorem gives us four necessary and sufficient  conditions
for a circuit to have a $1$-hazard. For a term $t$ and a clause $c$ we
write $t\cap\cl=\emptyset$ if $t$ and $\cl$ do not intersect, i.e., do
not share a literal in common.

\begin{thm}[$1$-hazards]\label{thm5}
  Let $F$ be a DeMorgan circuit computing a Boolean function~$f$, and
  $\tern\in\{0,\unc,1\}^n$ be a prime $1$-witness of $f$. The
  following assertions are equivalent.
  \begin{enumerate}
  \Item{1} $F$ has a $1$-hazard at $\tern$.

  \Item{2} $\cl(\tern)=\unc$ for some one-clause $\cl$ produced
    by~$F$.

  \Item{3} $\terntr\cap\cl=\emptyset$ for some one-clause $\cl$
    produced by~$F$.

  \Item{4} $t(\tern)\in\{0,\unc\}$ for every term $t$ produced by~$F$.

  \Item{5} The prime implicant $\terntr$ of $f$ is not produced by~$F$.
  \end{enumerate}
\end{thm}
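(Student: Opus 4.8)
The plan is to pivot everything on the identity $F(\tern)=D(\tern)=C(\tern)$, where $D=\bigvee_{t\in T}t$ is the formal DNF and $C=\bigwedge_{c}c$ the formal CNF of $F$ (these agree with $F$ as ternary functions since the annihilation laws are not used in forming them), together with one opening observation: as $\tern$ is a prime $1$-witness we have $f(\subcube)=\{1\}$, hence $F(\subcube)=\{1\}$, so $F(\tern)\in\{1,\unc\}$ by \cref{rem:no-switch}. Thus ``$F$ has a $1$-hazard at $\tern$'' is, \emph{by definition}, just ``$F(\tern)=\unc$'', i.e.\ ``$F(\tern)\ne1$''; in particular no produced clause can equal $0$ at $\tern$. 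I would prove \equival{1}{4} and \equival{4}{5} on the DNF side, and \equival{1}{2} and \equival{2}{3} on the CNF side.

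For \equival{1}{4}: since an OR of $\{0,\unc\}$-values is never $1$, we have $F(\tern)=D(\tern)=1$ exactly when some produced term $t$ has $t(\tern)=1$; negating and invoking the opening observation gives \equival{1}{4}. For \equival{4}{5}: if $\terntr$ is produced then $\terntr(\tern)=1$ violates (4), so (4)$\Rightarrow$(5); conversely, if some produced term $t$ has $t(\tern)=1$, then every literal of $t$ evaluates to $1$ at $\tern$, hence the literal set of $t$ is contained in that of $\terntr$, which has at most one literal per variable; so $t$ is a non-zero term with $\terntr\le t$, it is therefore an implicant of $f$ (any non-zero term produced by a circuit for $f$ lies below the formal DNF, which agrees with $f$ on $\{0,1\}^n$), and primeness of $\terntr$ forces $t=\terntr$, so $\terntr$ is produced and (5) fails. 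This is \equival{4}{5}.

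For \equival{1}{2}: since $C(\tern)=F(\tern)\in\{1,\unc\}$ and an AND of $\{1,\unc\}$-values is $\unc$ iff some factor is $\unc$, condition (1) amounts to the existence of a produced clause $\cl$ with $\cl(\tern)=\unc$. If such a $\cl$ were a non-one clause, then since no variable appears in $\cl$ with both signs I could resolve every unstable coordinate carrying a literal of $\cl$ so as to falsify that literal, obtaining a resolution $a\in\subcube$ with $\cl(a)=0$, hence $f(a)=F(a)=0$, contradicting $f(\subcube)=\{1\}$; so $\cl$ is a one-clause, which is (2). The converse \implic{2}{1} is immediate from $C(\tern)\le\cl(\tern)=\unc$ and $C(\tern)\ne0$. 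Finally \equival{2}{3}: the literals that evaluate to $1$ at $\tern$ are \emph{precisely} the literals of $\terntr$, so $\cl(\tern)=\unc$ immediately gives $\terntr\cap\cl=\emptyset$; conversely, if $\cl$ is a one-clause with $\terntr\cap\cl=\emptyset$, pick $i$ with $x_i,\bar{x}_i\in\cl$ --- neither lies in $\terntr$, so coordinate $i$ is unstable, whence $x_i(\tern)=\unc$ while no literal of $\cl$ is $1$ at $\tern$, so $\cl(\tern)=\unc$.

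I do not expect a real obstacle: the proof is a disciplined dictionary between the ternary value of a literal, term or clause at $\tern$ and membership combinatorics relative to $\terntr$. The only two steps needing a genuine (small) idea are the ``collapse'' arguments just described --- in \equival{4}{5}, upgrading ``some produced term equals $1$ at $\tern$'' to ``$\terntr$ itself is produced'' via primeness (after observing such a term is automatically an implicant squeezed between $\terntr$ and $f$); and in \equival{1}{2}, excluding non-one clauses by resolving an $\unc$-valued non-one clause at $\tern$ into a $0$ somewhere on $\subcube$, against $f(\subcube)=\{1\}$. These are exactly the $1$-hazard-specific sharpenings of \cref{prop:zero-terms} and of Eichelberger-style reasoning.
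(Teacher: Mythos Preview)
Your proposal is correct and follows essentially the same route as the paper: the same four equivalences \equival{1}{2}, \equival{2}{3}, \equival{1}{4}, \equival{4}{5} are proved by the same mechanisms (resolving a non-one clause to force a $0$ inside $\subcube$, and using primeness of $\terntr$ to identify any produced term with $t(\tern)=1$). The one organizational difference is that you front-load the observation $F(\tern)\in\{1,\unc\}$ via \cref{rem:no-switch}, which lets you replace the paper's separate contradiction arguments in \implic{4}{1} and \implic{2}{1} by one-line appeals to $F(\tern)\neq 1$ and $C(\tern)\neq 0$; this is a mild streamlining, not a different approach.
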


\begin{proof}
Let $D$ be the formal DNF and $C$ the formal CNF of~$F$. Since $\tern$ is a $1$-witness of $f$, we have $f(\subcube)=\{1\}$.

  \equival{1}{2}: To show \implic{1}{2}, suppose that
  $F$ has a $1$-hazard at~$\tern$.  Hence, $C(\subcube)=\{1\}$ but
  $C(\tern)=\unc$. Since $C(\tern)=\unc$, $c(\tern)=\unc$ must hold
  for some clause $\cl$ of $C$, and it remains to show that this must be a one-clause. Suppose to the contradiction that $\cl(\tern)=\unc$ holds for some non-one clause $\cl$ of $C$.
   This can only
  happen if the vector $\tern$ evaluates every literal of $\cl$ to $0$
  or to $\unc$. Since $\cl$ has no variable together with its
  negation, negations of literals of $\cl$ evaluated to $\unc$ do not
  appear in $\cl$, and we can evaluate every literal of $\cl$ to
  $0$. On every such resolution $a\in\subcube$ of $\tern$, we will
  have $\cl(a)=0$ and, hence, also $C(a)=0$, a contradiction with
  $C(\subcube)=\{1\}$.

 To show the opposite implication \implic{2}{1}, suppose
  that $\cl_0(\tern)=\unc$ holds for some one-clause $\cl_0$ of $C$.
  Since $\tern$ is a $1$-witness of $f$, $C(\subcube)=\{1\}$ holds.
  So, since $1\land\unc=\unc$, it remains to show that
  $\cl(\tern)\in\{1,\unc\}$ holds for all clauses $\cl$ of $C$; then
  $F(\tern)=C(\tern)=\unc$, meaning that $F$ has a $1$-hazard at
  $\tern$.  Assume to the contrary that $\cl(\tern)=0$ holds for some
  clause $\cl$ of $C$. Since $0\lor\unc=\unc\neq 0$, this means that
  the vector $\tern$ evaluates to $0$ all literals of $\cl$.  But then
  every resolution of $a\in\subcube$ of $\tern$ also evaluates these
  literals to $0$, and we obtain $c(a)=0$, a contradiction with
  $C(a)=1$.

  \equival{2}{3}: Let $\cl$ be a one-clause of $C$.
  If $\cl(\tern)=\unc$, then $z(\tern)\in\{0,\unc\}$ holds for all
  literals $z$ of $c$. But $\terntr(\tern)=1$ implies that the vector $\tern$ evaluates all literals of
  $\terntr$ to $1\not\in\{0,\unc\}$; hence,
  $\terntr\cap\cl=\emptyset$. To show the opposite implication \implic{3}{2}, note that (by the definition of the term $\terntr$)
  the vector
  $\tern$ evaluates to $\unc$ all variables not in $\terntr$.
  So, if $\terntr\cap\cl=\emptyset$, then $\tern$ evaluates every literal of $\cl$ to $0$ of to $\unc$. Since $\cl$ is a one-clause, it contains some variable $x_i$ together with its negation $\bar{x}_i$. Since $\cl(\tern)\neq 1$, we have $x_i(\alpha)=\unc$ and, hence, also $\cl(\tern)=\unc$.

  \equival{1}{4}: If $F$ has a $1$-hazard at $\tern$, then
  $D(\tern)=F(\tern)=\unc$ holds. So, since $1\lor\unc=1\neq \unc$,
  $t(\tern)\in\{0,\unc\}$ must hold for all terms $t$ of $D$. To show
  the opposite implication \implic{4}{1}, suppose that
  $t(\tern)\in\{0,\unc\}$ holds for every term $t$ of $D$. Our goal is
  to show that then the circuit $F$ has a $1$-hazard at~$\tern$.
  Since $f(\subcube)=\{1\}$ ($\tern$ is an implicant of $f$), it is
  enough to show that $t(\tern)=\unc$ holds for at least one term
  $t$ of~$D$. Suppose to the contrary that $t(\tern)=0$ holds for all
  terms $t$ of $D$. Then the vector $\tern$ evaluates to $0$ at least
  one literal in every term of $D$. But then also every resolution
  $a\in\subcube$ of $\tern$ evaluates to $0$ at least one literal in
  every term of $D$, and we have $D(\subcube)=\{0\}\neq\{1\}$, a
  contradiction with $f(\subcube)=\{1\}$.

  \equival{4}{5}: If the prime implicant $\terntr$ of $f$ is a term
  of~$D$, then $\terntr(\tern)=1\not\in\{0,\unc\}$ holds for this
  term. To show the opposite implication \implic{5}{4}, suppose that
  $t(\tern)=1$ holds for some term $t$ of $D$. Then
  $t(\subcube)=\{1\}$, that is, $\subcube\subseteq t^{-1}(1)$. Since
  $\subcube=\terntr^{-1}(1)$, this yields the inclusion
  $\terntr^{-1}(1)\subseteq t^{-1}(1)$, which can only hold if $t$ is
  a subterm of $\terntr$.  Since $t$ is an implicant of $f$ and
  $\terntr$ is a \emph{prime} implicant of $f$, this is only possible
  if $t=\terntr$. Hence, the prime implicant $\terntr$ is a term
  of~$D$, as desired.
\end{proof}

\begin{ex}\label{ex:thm5}
  Consider the circuit $F=x(\bar{y}\lor \bar{z})\lor \bar{x}y$
  computing the Boolean function $f(x,y,z)=x\bar{y}\lor
  x\bar{z}\lor \bar{x}y$. The formal DNF of $F$ is
  $D=x\bar{y}\lor x\bar{z}\lor \bar{x}y$ and the formal CNF is
  $C =(x\lor\bar{x})(\bar{x}\lor\bar{y}\lor\bar{z})(x\lor y)(y\lor \bar{y}\lor\bar{z})$.
   According to \cref{thm5}, the circuit $F$ has
  a $1$-hazard by either of the following four reasons, where
  $\tern=(\unc,1,0)$ is a prime $1$-witness of $f$, and
  $c=x\lor\bar{x}$ is a one-clause produced by $F$:
  \begin{itemize}
  \item[$\circ$] $c(\tern)=\unc\lor\bar{\unc}=\unc$;
  \item[$\circ$] $y\bar{z}\cap (x\lor\bar{x})=\emptyset$;
  \item[$\circ$] $x\bar{y}(\tern)=0\neq 1$ and $x\bar{z}(\tern)=\unc\neq 1$ and
    $\bar{x}y(\tern)=\bar{\unc}=\unc\neq 1$;
  \item[$\circ$] the prime implicant $y\bar{z}$ of~$f$ is not produced
    by~$F$.
  \end{itemize}
  And indeed, on the vector $\tern=(\unc,1,0)$, we have
  $F(\tern)=\unc(0\lor 1)\lor \bar{\unc}=\unc\lor\bar{\unc}=\unc$,
  even though $F(0,1,0)=F(1,1,0)=1$.
\end{ex}

\subsection{Structure of $0$-hazards}
\label{sec:dual}
There is also an analogue of \cref{thm5} for $0$-hazards (\cref{thm6} below), and it can be proved using ``dual'' arguments: just
interchange the roles of constants $0$ and $1$ as well as of terms and
clauses in the proof of \cref{thm5}. But to stress the duality between $0$-hazards and $1$-hazards
as well as between formal DNFs and formal CNFs, we will show that \cref{thm6} \emph{itself} is the dual version of \cref{thm5}.

Recall that the \emph{dual} of a Boolean function $f(x_1,\ldots,x_n)$
is the
Boolean function $\dual{f}(x):=\neg f(\bar{x})$, where $\bar{x}=
(\bar{x}_1,\ldots,\bar{x}_n)$. That is, we negate each input bit as
well as the result.  By the DeMorgan laws $\neg(x\lor
y)=\bar{x}\land\bar{y}$ and $\neg(x\land y)=\bar{x}\lor\bar{y}$, we
have $\dual{(f\lor g)}=\dual{f}\land\dual{g}$, $\dual{(f\land
  g)}=\dual{f}\lor\dual{g}$ and $\dual{(\neg f)}=\neg\dual{f}$. It is
well known and easy to show (see, e.g., \cite[Theorem~4.6]{hammer})
that (prime) implicates of a Boolean function $f(x)$ are (prime)
implicants of the dual function $\dual{f}(x)=\neg f(\bar{x})$, and
vice versa.

The \emph{dual} $\dual{F}$ of a DeMorgan circuit $F$ is obtained by
exchanging the gates AND and OR, as well as the input constants $0$
and $1$. For example, the dual of the circuit
$F=(\bar{x}\lor y)(y\lor\bar{z})\lor xyz$ is the circuit
$\dual{F}=(\bar{x}y\lor y\bar{z})(x\lor y\lor z)$.  In particular, the
dual of a clause $\cl=z_1\lor\cdots\lor z_m$ is the term
$\dual{\cl}=z_1\land\cdots\land z_m$, and vice versa.
Also, the dual of a
DNF $D=t_1\lor\cdots\lor t_l$ is the CNF $C=\dual{t}_1\land\cdots\land
\dual{t}_l$.  That is, in the ``dual world,'' the roles of constants
$0$ and $1$ as well as of operations AND and OR are interchanged.
Using  DeMorgan laws, it
is easy to show that a circuit $F$ computes a Boolean function $f$ iff
the dual circuit $\dual{F}$ computes $\dual{f}$ (see, for example,
\cite[Theorem~1.3]{hammer}).  Since the DeMorgan laws hold over the
ternary domain $\{0,\unc,1\}$, we have $\dual{F}(\tern)=\neg
F(\bar{\tern})$ for every ternary vector $\tern\in\{0,\unc,1\}^n$.
Finally, note that $(\dual{f})^d=f$ and  $(\dual{F})^d=F$.

\begin{ex}
The dual  $\dual{F}=x\bar{z}\lor yz$ of the circuit $F=(x\lor\bar{z})(y\lor z)$ computing the multiplexer function $f=xz\lor y\bar{z}$ computes $\dual{f}=x\bar{z}\lor yz$. The circuit $F$ produces the clause $\cl=x\lor\bar{z}$ while the circuit
$\dual{F}$ produces the term $\dual{\cl}=x\bar{z}$. The circuit $F$ has a $0$-hazard at $\tern=(0,0,\unc)$ while the circuit
$\dual{F}$ has a $1$-hazard at $\ntern=(1,1,\unc)$.
\end{ex}

It is not difficult to show that such a duality between the produced clauses and terms as well as between $0$-hazards and $1$-hazards
also holds in general.

\begin{lem}\label{lem:dual}
Let $F$ be a DeMorgan circuit, and $\tern\in\{0,\unc,1\}^n$.
\begin{enumerate}
\Item{i} A clause $\cl$ is produced by $F$ if and only if the term
    $\dual{\cl}$ is produced by~$\dual{F}$.
\Item{ii} $F$ has a $0$-hazard at $\tern$ if and only if $\dual{F}$
    has a $1$-hazard at~$\bar{\tern}$.
\end{enumerate}
\end{lem}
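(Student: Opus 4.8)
The plan is to prove both parts by induction on the structure of the circuit $F$, exploiting the fact that the dual operation $\dual{\ }$ turns the recursive rules defining produced terms into exactly the recursive rules defining produced clauses (and vice versa), and that it commutes with the ternary semantics via $\dual{F}(\tern)=\neg F(\bar\tern)$, already recorded in the text.

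For \textbf{(i)}, I would argue by induction on the gates of $F$, showing the slightly stronger statement that for every gate $g$ of $F$ (with corresponding gate $\dual{g}$ of $\dual{F}$), the clause $\cl$ is produced at $g$ if and only if $\dual{\cl}$ is produced at $\dual{g}$; the claim for the output gate is then the special case wanted. The base case is an input gate holding a literal $z$ or a constant $\bit\in\{0,1\}$: the clauses produced there are just $\{z\}$ (resp.\ $\{\bit\}$), the duals are $\{z\}$ (resp.\ $\{\bar\bit\}$), and $\dual{g}$ is the corresponding input gate of $\dual{F}$ holding exactly $z$ (resp.\ $\bar\bit$), so the sets match. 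For the inductive step, recall that an OR gate $g=g_1\lor g_2$ of $F$ becomes an AND gate $\dual{g}=\dual{g}_1\land\dual{g}_2$ of $\dual{F}$: the clauses produced at $g$ are the ``products'' $\cl_1\lor\cl_2$ with $\cl_i$ produced at $g_i$, and since $\dual{(\cl_1\lor\cl_2)}=\dual{\cl}_1\land\dual{\cl}_2$, by induction these correspond exactly to the terms produced at the AND gate $\dual{g}$ (namely unions of term-products of the sets produced at $\dual{g}_1,\dual{g}_2$). Dually, an AND gate of $F$ becomes an OR gate of $\dual{F}$, and the union-of-clauses rule for the AND gate of $F$ matches the union-of-terms rule for the OR gate of $\dual{F}$. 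This handles all gate types and closes the induction.

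For \textbf{(ii)}, I would translate everything through part (i), \cref{rem:no-switch}, and the identity $\dual{F}(\tern)=\neg F(\bar\tern)$. Observe first that the Boolean subcube of $\bar\tern$ is $A_{\bar\tern}=\{\,\bar a : a\in\subcube\,\}$, since resolving $\bar\tern$ and then complementing is the same as complementing $\tern$ and then resolving. Now suppose $F$ has a $0$-hazard at $\tern$, i.e.\ $F(\subcube)=\{0\}$ and $F(\tern)=\unc$. Then for every $a\in\subcube$ we get $\dual{F}(\bar a)=\neg F(a)=1$, so $\dual{F}(A_{\bar\tern})=\{1\}$; and $\dual{F}(\bar\tern)=\neg F(\tern)=\neg\unc=\unc$; hence $\dual{F}$ has a $1$-hazard at $\bar\tern$. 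The converse is identical upon using $(\dual{F})^d=F$ and $\bar{\bar\tern}=\tern$, already noted in the text. (Part (i) is not strictly needed for (ii) as stated, but the two together give the promised ``duality between produced clauses/terms and between $0$- and $1$-hazards.'')

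The only genuinely delicate point is bookkeeping in part (i): making precise the correspondence $g\mapsto\dual{g}$ between the gates of $F$ and of $\dual{F}$ so that the induction hypothesis can be applied to the two in-neighbours of each gate, and checking that the union/product rules for terms at AND/OR gates really are swapped under $\dual{\ }$ by the distributive-lattice identities $\dual{(\cl_1\lor\cl_2)}=\dual{\cl}_1\land\dual{\cl}_2$ and $\dual{(\cl_1\land\cl_2)}=\dual{\cl}_1\lor\dual{\cl}_2$. Everything else is routine.
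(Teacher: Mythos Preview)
Your proof of part~(i) is essentially the same as the paper's: an induction on the structure of $F$, matching the union/product rules for clauses at AND/OR gates of $F$ with the union/product rules for terms at the swapped OR/AND gates of $\dual{F}$. The paper phrases it as induction on circuit size rather than on gates, but the content is identical.

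For part~(ii) you take a slightly more direct route than the paper. You use the ternary identity $\dual{F}(\tern)=\neg F(\bar\tern)$ (recorded in the text just before the lemma) together with $\neg\unc=\unc$ to get $F(\tern)=\unc\Leftrightarrow \dual{F}(\bar\tern)=\unc$ in one line, and the subcube condition follows from the same identity on Boolean inputs. The paper instead argues the equivalence $F(\tern)=\unc\Leftrightarrow \dual{F}(\bar\tern)=\unc$ syntactically: it characterizes $F(\tern)=\unc$ via the terms of the formal DNF of $F$, invokes part~(i) to identify these with the clauses of the formal CNF of $\dual{F}$, and then reads off $\dual{F}(\bar\tern)=\unc$. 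Your version is shorter and does not need part~(i) at all for~(ii), as you correctly note; the paper's version has the mild advantage of making explicit how the duality of produced terms/clauses underlies the duality of hazards, which is the theme of the surrounding section. Both are correct.
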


\begin{proof}
(i): Easy induction on the size of the circuit $F$. The induction
  basis (when the circuit is a single literal or a constant) is
  trivial. For the induction step, suppose that the claim holds for
  all circuits of size at most $s-1$, and let $F$ be a circuit of
  size~$s$. Suppose that a clause $\cl$ is produced by~$F$.

  If $F=F_1\land F_2$, then $\cl$ is produced by the circuit $F_i$ for
  some $i\in\{1,2\}$. By the induction hypothesis, the term
  $\dual{\cl}$ is produced by $\dual{F}_i$. The dual of the circuit
  $F$ in this case is $\dual{F}=\dual{F_1}\lor\dual{F_2}$. So, the
  term $\dual{\cl}$ is produced by~$F$ as well.

  If $F=F_1\lor F_2$, then $\cl=\cl_1\lor \cl_2$ for some clauses
  $\cl_1$ and $\cl_2$ produced by $F_1$ and~$F_2$. By the induction
  hypothesis, the terms $t_1=\dual{\cl}_1$ and $t_2=\dual{\cl}_2$ are
  produced by the dual circuits $\dual{F}_1$ and $\dual{F}_2$. Hence,
  the term $t=t_1\land
  t_2=\dual{\cl}_1\land\dual{\cl}_2=\dual{(\cl_1\lor
    \cl_2)}=\dual{\cl}$ is produced by the dual circuit
  $\dual{F}=\dual{F_1}\land\dual{F_2}$.  This shows the
  $(\Rightarrow)$ direction of claim~(i); the opposite $(\Leftarrow)$
  direction can shown be via the same argument by interchanging ANDs
  and ORs.

  (ii): A Boolean vector $a$ is a resolution of $\tern$ iff its complement $\bar{a}$ is a resolution of $\ntern$. That is, $a\in\subcube$ iff $\bar{a}\in\nsubcube$. Since $F(a)=0$ iff $\dual{F}(\bar{a})=\neg F(a)=1$, we have $F(\subcube)=\{0\}$ if and only if $\dual{F}(\nsubcube)=\neg F(\subcube)=\{1\}$.
  On the other hand, $F(\tern)=\unc$ holds iff $t(\tern)\in\{0,\unc\}$ holds for all terms and $t(\tern)=\unc$ holds for at least one term $t=z_1\land\cdots \land z_m$ produced by $F$. By claim (i), the clauses produced by $\dual{F}$ are the duals $\dual{t}=z_1\lor\cdots \lor z_m$ of terms $t$ produced by $F$. Since $t(\tern)\in\{0,\unc\}$ iff $\dual{t}(\ntern)\in\{1,\unc\}$, and since $t(\tern)=\unc$ iff $\dual{t}(\ntern)=\unc$, we have $F(\tern)=\unc$ iff $\dual{F}(\ntern)=\unc$. Thus, $F$ has a $0$-hazard at $\tern$ if and only if $\dual{F}$  has a $1$-hazard at~$\bar{\tern}$.
\end{proof}

\begin{thm}[$0$-hazards]\label{thm6}
  Let $F$ be a DeMorgan circuit computing a Boolean function~$f$, and
  $\tern\in\{0,\unc,1\}^n$ be a prime $0$-witness of $f$. The
  following assertions are equivalent.
  \begin{enumerate}
  \Item{1} $F$ has a $0$-hazard at $\tern$.

  \Item{2} $t(\tern)=\unc$ for some zero-term $t$ produced by~$F$.

  \Item{3} $\terncl\cap t=\emptyset$ for some zero-term $t$
    produced by~$F$.

  \Item{4} $\cl(\tern)\in\{1,\unc\}$ for every clause $\cl$ produced
    by~$F$.

  \Item{5} The prime implicate $\terncl$ of $f$ is not produced by~$F$.
  \end{enumerate}
\end{thm}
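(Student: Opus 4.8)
The plan is to deduce \cref{thm6} from \cref{thm5} by passing to the dual circuit, as already announced, rather than repeating the argument of \cref{thm5} with the roles of $0/1$ and AND$/$OR interchanged. First I would assemble the dictionary between $F$ and $\dual{F}$. Recall that $\dual{F}$ is a DeMorgan circuit computing $\dual{f}$, that $(\dual{F})^d=F$, and that $\dual{F}(\tern)=\neg F(\bar{\tern})$ holds over the whole ternary domain. Since (prime) implicates of $f$ are exactly (prime) implicants of $\dual{f}$, a vector $\tern$ is a prime $0$-witness of $f$ --- i.e., $\terncl$ is a prime implicate of $f$ --- if and only if the term $\dual{\terncl}$ is a prime implicant of $\dual{f}$. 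A direct check of the definitions gives $\dual{\terncl}=\nterntr$: dualizing only swaps $\land$ and $\lor$ while keeping the literals, and $\ntern$ replaces each stable bit $\tern_i$ by $1-\tern_i$ and leaves the unstable bits untouched. Hence $\ntern$ is a prime $1$-witness of $\dual{f}$, so \cref{thm5} applies to the circuit $\dual{F}$ and the vector $\ntern$, and it remains only to translate its five equivalent conditions back.

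For the translation I would use \cref{lem:dual}, read in both directions. By \cref{lem:dual}(ii), ``$\dual{F}$ has a $1$-hazard at $\ntern$'' is equivalent to ``$F$ has a $0$-hazard at $\tern$'', which is condition~(1) of \cref{thm6}. Applying \cref{lem:dual}(i) to $\dual{F}$ (using $(\dual{F})^d=F$), the clauses produced by $\dual{F}$ are precisely the duals $\dual{t}$ of the terms $t$ produced by $F$; moreover $t$ is a zero-term if and only if $\dual{t}$ is a one-clause (dualizing preserves the literal set, hence the presence of a complementary pair), $\dual{t}$ has the same literal set as $t$, and $\dual{t}(\ntern)=\neg t(\tern)$. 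Using also that $\nterntr$ and $\terncl$ share the same literal set, conditions~(2) and~(3) of \cref{thm5} for $\dual{F}$ become, respectively, ``$t(\tern)=\unc$ for some zero-term $t$ produced by $F$'' and ``$\terncl\cap t=\emptyset$ for some zero-term $t$ produced by $F$'', i.e., conditions~(2) and~(3) of \cref{thm6}. Dually, the terms produced by $\dual{F}$ are the duals $\dual{\cl}$ of the clauses $\cl$ produced by $F$, with $\dual{\cl}(\ntern)=\neg\cl(\tern)$, so condition~(4) for $\dual{F}$ becomes ``$\cl(\tern)\in\{1,\unc\}$ for every clause $\cl$ produced by $F$''. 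Finally, by \cref{lem:dual}(i), $\nterntr=\dual{\terncl}$ is produced by $\dual{F}$ if and only if $\terncl$ is produced by $F$, so condition~(5) for $\dual{F}$ becomes ``the prime implicate $\terncl$ of $f$ is not produced by $F$''. Chaining these five equivalences with those supplied by \cref{thm5} yields \cref{thm6}.

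I do not expect any genuine mathematical obstacle --- the whole argument is duality bookkeeping. The points that deserve care are verifying that ``being a zero-term'' is self-dual and that $\dual{\terncl}$ is literally the term $\nterntr$ attached to the complementary vector $\ntern$, so that the hypotheses of the two theorems really correspond; and keeping straight the two readings of \cref{lem:dual}(i) --- clauses of $F$ versus terms of $\dual{F}$ (its direct statement), and clauses of $\dual{F}$ versus terms of $F$ (obtained by applying it to $\dual{F}$ and using $(\dual{F})^d=F$). Once this dictionary is pinned down, the conversion of each clause of \cref{thm5} is purely mechanical.
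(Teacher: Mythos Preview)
Your proposal is correct and follows essentially the same approach as the paper: apply \cref{thm5} to the dual circuit $\dual{F}$ at the vector $\ntern$, after checking that $\dual{\terncl}=\nterntr$ so that $\ntern$ is a prime $1$-witness of $\dual{f}$, and then translate each of the five conditions back using \cref{lem:dual}. Your bookkeeping is slightly tidier than the paper's (you package the literal-level calculations into the single identity $\dual{t}(\ntern)=\neg t(\tern)$), but the argument is the same duality reduction.
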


\begin{proof}
For notational simplicity, assume w.l.o.g. that $\tern$ is of the form $\tern=(a_1,\ldots,a_m,\unc,\ldots,\unc)$ with all $a_i\in\{0,1\}$.
Then the clause associated with the vector $\tern$ is $\terncl=x_1^{\bar{a}_1}\lor \cdots \lor x_m^{\bar{a}_m}$, and the term associated with complementary vector $\ntern=(\bar{a}_1,\ldots,\bar{a}_m,\unc,\ldots,\unc)$ is the dual
$\nterntr=x_1^{\bar{a}_1}\land \cdots \land x_m^{\bar{a}_m}=\dual{\terncl}$ of the clause $\terncl$. It is well known and easy to show
  (see, e.g., \cite[Theorem~4.1]{hammer}) that for any two Boolean
  functions $g$ and $f$, we have $f\leq g$ if and only if
  $\dual{g}\leq\dual{f}$. Thus, $f\leq \terncl$ holds iff
  $\nterntr\leq \dual{f}$ holds, that is, the vector $\tern$ is a
  prime $0$-witness of~$f$ if and only if  the vector $\ntern$ is a
  prime $1$-witness of~$\dual{f}$.

When applied to the dual circuit
$\dual{F}$ computing the dual function $\dual{f}$, \cref{thm5} implies
that following assertions are equivalent.

\begin{enumerate}
  \Item{1*} The circuit $\dual{F}$ has a $1$-hazard at $\ntern$.

  \Item{2*} $\cl(\ntern)=\unc$ for some one-clause $\cl$ produced
  by~$\dual{F}$.

  \Item{3*} $\nterntr\cap\cl=\emptyset$  for some one-clause
  $\cl$ produced by~$\dual{F}$.

  \Item{4*} $t(\ntern)\in\{0,\unc\}$ for every term $t$ produced
  by~$\dual{F}$.

  \Item{5*} The prime implicant $\nterntr$ of $\dual{f}$ is not
  produced by~$\dual{F}$.
\end{enumerate}
It is therefore enough to show that the corresponding assertions are
equivalent. The equivalences \equival{1}{1*} and \equival{5}{5*}
follow directly from \cref{lem:dual}. The remaining three equivalences
also follow from \cref{lem:dual}(i) and the following simple observations.
 We have $\cl(\tern)=0$ for a clause $\cl$ iff the vector $\tern$ evaluates all literals of $\cl$ to $0$, which happens precisely when
 the complementary vector $\ntern$ evaluates all these literals to $1$, that is, when $t(\ntern)=1$ holds for the term $t=\dual{c}$.
So, $\cl(\tern)\in\{1,\unc\}$ iff $t(\ntern)\in\{0,\unc\}$, and
 $\cl(\tern)=\unc$ iff $t(\ntern)=\unc$. Since a clause is a one-clause iff its dual is a zero term, the equivalences \equival{2}{2*} and \equival{4}{4*} follow.
Finally, the dual $t=\dual{\cl}$ of any
one-clause $\cl$ from claim (3*) is a zero-term produced by $F$.
Note that both $t$ and $\cl$ have the same literals. The sets of literals of the term $\nterntr=x_1^{\bar{a}_1}\land \cdots \land x_m^{\bar{a}_m}$ and of the clause $\terncl=x_1^{\bar{a}_1}\lor \cdots \lor x_m^{\bar{a}_m}$ are also the same.
So,  $\nterntr\cap\cl=\emptyset$ holds precisely when $\terncl\cap t=\emptyset$ holds, and the equivalence \equival{3}{3*} follows as well.
\end{proof}

As we already mentioned in the introduction, a classical result of  Huffman~\cite{huffman} is that  the DNF whose terms are prime implicants of a Boolean function $f$ is a hazard-free circuit computing~$f$. An also classical result of Eichelberger~\cite[Theorem~2]{Eichelberger} extends Huffman's theorem.

\begin{nthm}
  Let $F$ be a DeMorgan circuit computing a Boolean function~$f$.  If
  $F$ produces no zero-terms, then $F$ is hazard-free if and only if
  the circuit~$F$ produces all prime implicants of~$f$.
\end{nthm}
This theorem is stated and proved in~\cite{Eichelberger} only for zero-term free DNFs (depth-two circuits), but it also holds for DeMorgan
circuits producing no zero-terms: formal DNFs of such circuits do not
have such terms. Using \cref{lem:dual,thm5,thm6}, we can remove the ``no zero-terms'' restriction from Eichelberger's theorem.

\begin{cor}[Extended Eichelberger's Theorem]\label{cor:ext-Eichenberger}
Let $F$ be a DeMorgan circuit computing a Boolean function $f$. The following assertions are equivalent.
\begin{enumerate}
\Item{1} $F$ is hazard-free.
\Item{2} Neither $F$ nor $\dual{F}$ has a $0$-hazard.
\Item{3} Neither $F$ nor $\dual{F}$ has a $1$-hazard.
\Item{4} $F$ produces all prime implicants and all prime implicates of~$f$.
\end{enumerate}
\end{cor}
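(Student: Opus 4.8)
The plan is to obtain $(1)\Leftrightarrow(2)$ and $(1)\Leftrightarrow(3)$ from the duality in \cref{lem:dual}, and $(1)\Leftrightarrow(4)$ from \cref{prop1} together with the ``missing prime implicant/implicate'' clauses --- the equivalence \equival{1}{5} in \cref{thm5} and in \cref{thm6}.

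For the first two equivalences, recall that, by definition, $F$ is hazard-free exactly when $F$ has neither a $0$-hazard nor a $1$-hazard. \Cref{lem:dual}(ii) applied to $F$ says that $F$ has a $0$-hazard at $\tern$ iff $\dual{F}$ has a $1$-hazard at $\bar{\tern}$; applied to $\dual{F}$, together with $(\dual{F})^d=F$, it says that $\dual{F}$ has a $0$-hazard at $\tern$ iff $F$ has a $1$-hazard at $\bar{\tern}$. Since $\tern\mapsto\bar{\tern}$ is a bijection of $\{0,\unc,1\}^n$, these translate into the two statements ``$F$ has no $0$-hazard $\Leftrightarrow$ $\dual{F}$ has no $1$-hazard'' and ``$F$ has no $1$-hazard $\Leftrightarrow$ $\dual{F}$ has no $0$-hazard''. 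Substituting each of these into the definition of hazard-freeness yields $(1)\Leftrightarrow(2)$ and $(1)\Leftrightarrow(3)$ (and, incidentally, $(2)\Leftrightarrow(3)$).

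For $(1)\Rightarrow(4)$ I would argue directly. Every prime implicant of $f$ equals $\terntr$ for the prime $1$-witness $\tern$ obtained by putting $\unc$ in exactly the coordinates carrying no literal of the term, and every prime implicate equals $\terncl$ for the corresponding prime $0$-witness $\tern$; this is immediate from the definitions in this section. If $F$ is hazard-free, it has no $1$-hazard at $\tern$, so the equivalence \equival{1}{5} in \cref{thm5} forces $F$ to produce the prime implicant $\terntr$; dually, $F$ has no $0$-hazard at $\tern$, so \equival{1}{5} in \cref{thm6} forces $F$ to produce the prime implicate $\terncl$. Conversely, for $(4)\Rightarrow(1)$, if $F$ were not hazard-free it would have a $0$- or a $1$-hazard, hence by \cref{prop1} a $\bit$-hazard at some prime $\bit$-witness $\tern$; then \equival{1}{5} in \cref{thm5} (if $\bit=1$) or in \cref{thm6} (if $\bit=0$) exhibits a prime implicant or prime implicate of $f$ not produced by $F$, contradicting $(4)$. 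This closes all the equivalences.

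No step is genuinely hard; the proof is a careful assembly of the section's results. The only place warranting attention is the bookkeeping in the duality step --- one must invoke \cref{lem:dual}(ii) for both $F$ and $\dual{F}$ and use that negating the ternary input is a bijection, so that ``no $\bit$-hazard anywhere'' dualizes to ``no $\bar{\bit}$-hazard anywhere'' --- and the trivial but essential observation that prime implicants (resp.\ prime implicates) of $f$ are precisely the terms $\terntr$ (resp.\ clauses $\terncl$) of prime $1$-witnesses (resp.\ prime $0$-witnesses) $\tern$, which is what lets \cref{thm5} and \cref{thm6} be applied.
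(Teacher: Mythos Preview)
Your proposal is correct and follows essentially the same route as the paper: the equivalences \equival{1}{2} and \equival{1}{3} are obtained from \cref{lem:dual}(ii) (you spell out the bijection $\tern\mapsto\bar{\tern}$ and the use of $(\dual{F})^d=F$ that the paper leaves implicit), and \equival{1}{4} is obtained from the \equival{1}{5} parts of \cref{thm5} and \cref{thm6} together with \cref{prop1}, exactly as in the paper. The only difference is that you make the identification of prime implicants/implicates with prime $1$-/$0$-witnesses explicit, which is harmless bookkeeping.
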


\begin{proof}
The equivalences \equival{1}{2} and  \equival{1}{3} are  direct consequences of \cref{lem:dual}. The implication \implic{1}{4} follows directly from \cref{thm5,thm6}. To show the opposite implication \implic{4}{1}, suppose that $F$ produces all prime implicants and all prime implicates of~$f$. Then, by \cref{thm5,thm6}, the circuit $F$  has no hazards at prime witnesses $\tern\in\{0,\unc,1\}^n$ of $f$ and, by \cref{prop1}, has no hazards at any ternary vectors either.
\end{proof}

\section{Final remarks}
\label{sec:final}
Recall that $\uL{f}$ denotes the minimum number of gates in a DeMorgan
circuit computing a Boolean function~$f$. We also have introduced
versions of this complexity measure: $\hL{f}$ (the circuit must be
hazard-free), $\zpL{f}$ (the circuit must produce all prime implicants
of $f$ but must produce no zero-terms), and $\mL{f}$ (the circuit must
be monotone, i.e. must have no negated variables as inputs). Finally, let $\hL{n}$ be the
Shannon function for hazard-free circuits, that is, the maximum of
$\hL{f}$ over all Boolean functions $f$ of $n$ variables.

By \cref{cor:lowerA}, we know that for every Boolean function~$f$ the
inequalities $\mL{\up{f}}\leq \hL{f}\leq \zpL{f}$ hold. So, the following
natural questions arise.

\begin{enumerate}
\item How large can the gap $\hL{f}/\mL{\up{f}}$ be?
\item How large can the gap $\zpL{f}/\hL{f}$ be?
\item What is the asymptotic value of $\hL{n}$? Is it $\hL{n}\sim
  2^n/n$?
\end{enumerate}

Let us show that we already know the \emph{order} of growth of the function $\hL{n}$: it is $2^n/n$. This can be shown by extending the construction of Shannon~\cite{shannon49} and
Muller~\cite{muller} to show that $\uL{n}=O(2^n/n)$ holds for
unrestricted circuits. They used the recursion
$F(x_1,\ldots,x_n)=\bar{x}_n\cdot F_0\lor x_n\cdot F_1$, where
 $F_0$ and $F_1$ are DeMorgan circuits computing the subfunctions
$f_0=f(x_1,\ldots,x_{n-1},0)$ and $f_1=f(x_1,\ldots,x_{n-1},1)$ of a given Boolean function $f(x_1,\ldots,x_n)$.

The circuit $F$ constructed using this
decomposition will produce no zero-terms and, by \cref{prop:zero-terms},  will have no $0$-hazards. But it can have $1$-hazards even if  both subcircuits $F_0$ and $F_1$ are hazard-free.
 For this to happen, it is enough that some prime
implicant $t$ of $f$ contains neither $x_n$ nor $\bar{x}_n$; then $t$ is not produced by $F$ and, by \cref{thm5}, $F$ has a $1$-hazard. And indeed, we
can take any ternary vector $\tern=(a,\unc)$ with $a\in\{0,1\}^{n-1}$
and $t(a)=1$. Since then also $F_0(a)=1$ and $F_1(a)=1$, we have
$F(\subcube)=\{1\}$.  But $F(\tern)=\bar{\unc}\cdot 1\lor \unc\cdot
1=\unc$.

A classical idea to generate prime implicants, rediscovered by several independent researchers---Blacke~\cite{Blake}, Samson and Mills~\cite{Samson}, Quine~\cite{Quine}---is to use the \emph{consensus} recursion $F=\bar{x}_n\cdot F_0\lor x_n\cdot F_1\lor F_0\cdot F_1$: if a term $t$ is an implicant of $f$, and if $t$ contains neither $\bar{x}_n$ nor $x_n$, then $t$ is an implicant of both subfunctions $f_0$ and $f_1$ of~$f$.

It is, however, not a priori clear that this extended recursion will not introduce $0$-hazards: even if neither of the subcircuits $F_0$ and $F_1$ produces
zero-terms, the additional subcircuit $F_0\cdot F_1$ could, in general, produce zero-terms and
such terms may lead to $0$-hazards (see \cref{thm5}). Fortunately, these zero-terms are ``innocent,'' as long as both   circuits
$F_0$ and $F_1$ are hazard-free.

\begin{prop}\label{prop:decomp}
Let $F_0(x_1,\ldots,x_{n-1})$ and $F_1(x_1,\ldots,x_{n-1})$ be arbitrary hazard-free circuits, and $x_n$ be a new variable. Then the circuit
\begin{equation}\label{eq:rec}
  F(x_1,\ldots,x_n)=\bar{x}_n\cdot
  F_0\lor x_n\cdot F_1\lor F_0\cdot F_1
\end{equation}
 is hazard-free.
\end{prop}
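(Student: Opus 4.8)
The plan is to verify condition~(4) of the Extended Eichelberger Theorem (\cref{cor:ext-Eichenberger}): that $F$ produces all prime implicants and all prime implicates of the Boolean function it computes. Write $f_0$ and $f_1$ for the Boolean functions computed by $F_0$ and $F_1$. The circuit $F$ computes the function $f$ with $f(x,0)=f_0(x)$ and $f(x,1)=f_1(x)$, since the extra disjunct $F_0F_1$ does not change the computed function ($f_0f_1\leq \bar{x}_nf_0\lor x_nf_1$ holds over $\{0,1\}$). From the definition of produced terms and clauses, the terms produced by $F$ are exactly the terms $\bar{x}_nt$ for $t$ produced by $F_0$, the terms $x_nt$ for $t$ produced by $F_1$, and the products $t_0t_1$ of a term $t_0$ produced by $F_0$ and a term $t_1$ produced by $F_1$; dually, the clauses produced by $F$ are exactly the clauses $c_0\lor c_1\lor c_2$ with $c_0\in\{\bar{x}_n\}\cup C_0$, $c_1\in\{x_n\}\cup C_1$ and $c_2\in C_0\cup C_1$, where $C_0,C_1$ denote the sets of clauses produced by $F_0,F_1$. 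Finally, since $F_0$ and $F_1$ are hazard-free, \cref{cor:ext-Eichenberger} guarantees that $F_0$ produces every prime implicant and every prime implicate of $f_0$, and likewise for $F_1$ and $f_1$ (where, as usual, the empty term counts as the prime implicant of the constant~$1$ and the empty clause as the prime implicate of the constant~$0$).

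Next I would classify the prime implicants of $f$ by whether they contain $\bar{x}_n$, contain $x_n$, or contain neither literal. If $t=\bar{x}_nt'$ with $t'$ not mentioning $x_n$, then $t\leq f$ is equivalent to $t'\leq f_0$, and primality of $t$ forces $t'$ to be a prime implicant of $f_0$; hence $F_0$ produces $t'$ and $F$ produces $\bar{x}_nt'=t$. The case $t=x_nt'$ is symmetric, with $f_1$ in place of $f_0$. If $t$ mentions neither variable, then $t\leq f$ is equivalent to the conjunction of $t\leq f_0$ and $t\leq f_1$; hence $t$ contains a prime implicant $p_0$ of $f_0$ and a prime implicant $p_1$ of $f_1$ as subterms, the product $p_0p_1$ is again an implicant of $f$ and a subterm of $t$, and primality of $t$ yields $t=p_0p_1$, which $F$ produces (this is exactly the consensus step, and the only case using the disjunct $F_0F_1$). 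The prime implicates are handled dually: if $c=x_n\lor c'$ with $c'$ not mentioning $x_n$, then $c'$ is forced to be a prime implicate of $f_0$, and $F$ produces $x_n\lor c'$ by taking $c_0=c'$, $c_1=x_n$, $c_2=c'$ in the description above (using $c'\lor x_n\lor c'=x_n\lor c'$); the case $c=\bar{x}_n\lor c'$ is symmetric with $f_1$; and a prime implicate mentioning neither variable is a minimal clause that is an implicate of both $f_0$ and $f_1$, hence equals $q_0\lor q_1$ for prime implicates $q_0$ of $f_0$ and $q_1$ of $f_1$, which $F$ produces via $c_0=q_0$, $c_1=q_1$, $c_2=q_0$. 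Having shown that $F$ produces all prime implicants and all prime implicates of $f$, \cref{cor:ext-Eichenberger} gives that $F$ is hazard-free.

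The one genuinely delicate point is the ``neither $x_n$ nor $\bar{x}_n$'' case: one must show that a prime implicant (resp.\ prime implicate) of $f$ avoiding the variable $x_n$ is \emph{forced} to be a consensus $p_0p_1$ (resp.\ $q_0\lor q_1$) of prime implicants (resp.\ implicates) of the two subfunctions, and that such a consensus is in fact produced syntactically --- for terms this is the disjunct $F_0F_1$, for clauses it is extracted from the clause description using idempotence of~$\lor$. The degenerate subcases in which $f_0$ or $f_1$ is a constant need a brief separate check, but cause no problem, since by \cref{cor:ext-Eichenberger} a hazard-free circuit for a constant function produces the corresponding empty term or empty clause.
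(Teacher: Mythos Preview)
Your proof is correct, but it takes a genuinely different route from the paper's. The paper argues directly in the ternary semantics: it assumes $F$ has a hazard at some $\alpha\in\{0,\unc,1\}^n$, splits into the cases $\alpha_n\in\{0,1\}$ and $\alpha_n=\unc$, and in each case shows by a short computation (using absorption and \cref{rem:no-switch}) that one of $F_0,F_1$ already has a hazard at the restriction of $\alpha$. No appeal to \cref{cor:ext-Eichenberger} is made. Your argument instead leverages the Extended Eichelberger Theorem twice---once to extract the prime implicants/implicates produced by $F_0$ and $F_1$, and once to conclude hazard-freeness of $F$---and in between carries out a clean syntactic classification of the prime implicants and implicates of $f$ according to whether they involve $x_n$. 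The paper's approach is more elementary and self-contained; yours is more structural and makes explicit the connection to the consensus method (showing exactly \emph{why} the extra disjunct $F_0F_1$ supplies the missing prime implicants, and how the clause side takes care of itself via idempotence). Both are short; yours depends on the machinery of \cref{sec:struct}, which is available at this point in the paper.
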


\begin{proof}
Assume that the circuit $F$ has a hazard at
some vector $\tern\in\{0,\unc,1\}^n$; hence, $F(\subcube)=\{\bit\}$
for some $\bit\in\{0,1\}$ but $F(\tern)=\unc$.  Our goal is to show
that then at least one of the circuits $F_0$ and $F_1$ must have an
$\bit$-hazard at~$\tern$.

\Case{1} $\tern_n\in\{0,1\}$, say, $\tern_n=0$. Then for every resolution $a\in\subcube$ of~$\tern$, we have $F(a)=F_0(a)\lor F_0(a)\cdot F_1(a)=F_0(a)$; hence, $F_0(\subcube)=F(\subcube)=\{\bit\}$.
Since $\tern_n=0$, we have $x_n\cdot F_1(\tern)=0$ and, since the absorbtion law $x\lor xy=x$ holds also over $\{0,\unc,1\}$, we obtain $F(\tern)=F_0(\tern)\lor F_0(\tern)\cdot F_1(\tern)=F_0(\tern)$.
Thus, $F_0(\tern)=F(\tern)=\unc$, meaning that the circuit $F_0$ has a $\bit$-hazard at~$\tern$.

\Case{2} $\tern_n=\unc$. Since the circuits $F_0$ and $F_1$ do not depend on $x_n$, both $F_0(\subcube)=\{\bit\}$ and $F_1(\subcube)=\{\bit\}$  must hold in this case. Indeed, if say, $F_0(a)=\bar{\bit}$ for some resolution $a\in\subcube$ of $\tern$, then also $F_0(a')=\bar{\bit}$ for the resolution $a'=(a_1,\ldots,a_{n-1},0)$ of $\tern$, and we obtain
$F(a')=1\cdot \bar{\bit}\lor \bar{\bit}\cdot F_1(a')=\bar{\bit}\neq\bit$. Thus, both $F_0(\subcube)=\{\bit\}$ and $F_1(\subcube)=\{\bit\}$ hold. Then both values $F_0(\tern)$ and $F_1(\tern)$ must belong to $\{\bit,\unc\}$ (see~\cref{rem:no-switch}). Since $\unc\land 0=0$ and $\unc\lor 1=1$, both
$F_0(\tern)=\bit$ and $F_1(\tern)=\bit$ cannot hold because then
$F(\tern)=\bar{\unc}\cdot\bit\lor \unc\cdot\bit\lor \bit=\bit\neq\unc$. So, $F_i(\tern)=\unc$ holds for some
$i\in\{0,1\}$, meaning that the circuit $F_i$ has a $\bit$-hazard at
$\tern$.
\end{proof}

When directly applied, the recursion \cref{eq:rec} yields $\hL{n}= O(2^n)$. Nitin Saurabh (personal communication) suggested to combine
the hazard-freeness preserving recursion
\cref{eq:rec} with the argument used by Shannon~\cite{shannon49} and
Muller~\cite{muller} to show $\uL{n}=O(2^n/n)$ for unrestricted circuits. And indeed, the combination yields much better upper bound
$\hL{n}= O(2^n/n)$.

Take an arbitrary Boolean function $f(x_1,\dots,x_n)$, and apply
the recursion \cref{eq:rec} for $n-m$ steps to obtain a hazard-free
circuit $F_{n-m}$ of size $5\cdot 2^{n-m}$ computing the function $f$ from all
its $2^{n-m}$ subfunctions
$f_b(x_1,\ldots,x_m)=f(x_1,\ldots,x_m,b_1,\ldots,b_{n-m})$ on the
first $m$ variables; here, $m\leq n$ is a parameter to be specified latter.
Inputs to $F$ are Boolean functions from the set
$\HH_m$ of all $|\HH_m|=2^{2^m}$ Boolean functions $h(x_1,\ldots,x_m)$
on the first $m$ variables. Shannon's idea is this: if $2^{n-m}\gg
2^{2^m}$, then same functions from $\HH_m$ will appear
many times among the inputs of $F$. It is then more economical to
simultaneously compute all the functions in $\HH_m$ once beforehand,
rather than to recompute the residual functions $f_b$ at each of the
$2^{n-m}$ inputs of $F$.

\circuits

Using the recursion \cref{eq:rec}, we can construct a hazard-free
circuit $\U_m$ of size at most $5\cdot 2^{2^m}$ which simultaneously computes all
$2^{2^m}$ Boolean functions in~$\HH_m$: given the circuit $\U_{m-1}$,
we can use the recursion \cref{eq:rec} to obtain the circuit $\U_m$ by
adding five gates per one function in~$\HH_m$ (see \cref{fig:shannon}, left). So, the size of the
resulting circuit $\U_m$ is at most $5$ times
$\sum_{i=2}^{m-1}2^{2^i}\leq 2^{2^m}$.

By identifying the input gates
$f_b\in\HH_m$ of the circuit $F_{n-m}$ with the corresponding output gates
of $\U_m$, we obtain a hazard-free circuit $F$ for $f$.
That is, the circuit $F$ is obtained from the circuit $F_{n-m}$ by further applying the hazard-freeness preserving recursion \cref{eq:rec} (see \cref{fig:shannon}, right): we only do not repeat the construction for the same input subfunctions~$f_b$.
By \cref{prop:decomp}, the obtained circuit $F$ is hazard-free.
The number of gates in $F$ is at most five times $2^{2^m}+2^{n-m}$. For $m=\log_2 (n-\log_2 n)$, we have $2^{2^m}=2^n/n$ and
$2^{n-m}=2^n/(n-\log_2n)=c\cdot 2^n/n$, where $c=\left(1+\tfrac{\log_2n}{n-\log_2n}\right)$. So, $\hL{n}=O(2^n/n)$ follows by takings $m$ to be a nearest to $\log_2 (n-\log_2 n)$ integer.  Since $\hL{n}\geq \uL{n}=\Omega(2^n/n)$, the order
of magnitude of the Shannon function for hazard-free circuits is
already known: $\hL{n}=\Theta(2^n/n)$.  It remains, however, open
whether $\hL{n}\sim 2^n/n$ holds.

\section*{Acknowledgments}
I am thankful to Nitin Saurabh and Igor Sergeev for enlightening
discussions.


\end{document}